\def\BibTeX{{\rm B\kern-.05em{\sc i\kern-.025em b}\kern-.08em
    T\kern-.1667em\lower.7ex\hbox{E}\kern-.125emX}}
\theoremstyle{remark} 
\newtheorem{proposition}{Proposition}
\newtheorem{lemma}{Lemma}
\renewcommand{\eqref}[1]{(\ref{#1})}
\newcommand{\figref}[1]{Fig.~\ref{#1}}
\newcommand{\tabref}[1]{Tab.~\ref{#1}}
\newcommand{\algref}[1]{Alg.~\ref{#1}}
\newcommand{\alglinref}[1]{Line~\ref{#1}}
\newcommand{\secref}[1]{Section~\ref{#1}}
\newcommand{\lemref}[1]{Lemma~\ref{#1}}
\newcommand{\propref}[1]{Propositon~\ref{#1}}
\begin{document}
\title{User Association and Channel Allocation in 5G Mobile Asymmetric Multi-band Heterogeneous Networks}
\author{Miao Dai, \textit{Member, IEEE}, Gang Sun, \textit{Senior Member, IEEE}, Hongfang Yu, \textit{Member, IEEE}, \\ Sheng Wang, \textit{Member, IEEE}, and Dusit Niyato, \textit{Fellow, IEEE}
\thanks{This work was supported in part by the National Key Research and Development Program of China under Grant 2019YFB1802800. 
	}
\thanks{Miao Dai, Gang Sun, Hongfang Yu, and Sheng Wang are with the Key Laboratory of Optical Fiber Sensing and Communications (Ministry of Education), University of Electronic Science and Technology of China, Chengdu 611731, China (e-mail: daimiao@std.uestc.edu.cn; gangsun@uestc.edu.cn; yuhf@uestc.edu.cn; wsh\_keylab@uestc.edu.cn).}
\thanks{Dusit Niyato is with College of Computing and Data Science, Nanyang Technological University, Singapore (e-mail: dniyato@ntu.edu.sg).}

\vspace*{5pt}
\parbox{0.9\textwidth}{
\begin{flushleft}
	\begin{abstract}
		
		With the proliferation of mobile terminals and the continuous upgrading of services, 4G LTE networks are showing signs of weakness. To enhance the capacity of wireless networks, millimeter waves are introduced to drive the evolution of networks towards multi-band 5G heterogeneous networks. The distinct propagation characteristics of mmWaves and microwaves, as well as the vastly different hardware configurations of heterogeneous base stations, make traditional access strategies no longer effective. Therefore, to narrowing the gap between theory and practice, we investigate the access strategy in multi-band 5G heterogeneous networks, taking into account the characteristics of mobile users, asynchronous switching between uplink and downlink of pico base stations, asymmetric service requirements, and user communication continuity. We formulate the problem as integer nonlinear programming and prove its intractability. Thereby, we decouple it into three subproblems: user association, switch point selection, and subchannel allocation, and design an algorithm based on optimal matching and spectral clustering to solve it efficiently. The simulation results show that the proposed algorithm outperforms the comparison methods in terms of overall data rate, effective data rate, and number of satisfied users.
		
	\end{abstract}
	
	\begin{IEEEkeywords}
		5G HetNet, mmWave, mobility, asymmetric services, user association, channel allocation.
	\end{IEEEkeywords}
	
\end{flushleft}}
\vspace*{-40pt}
}


\maketitle

\section{Introduction} \label{sec:introduction}

\IEEEPARstart{W}{ith} the continuous popularization and deepening of the network, the number of mobile terminals entering the network is growing exponentially, and people are putting forward more and more diversified and customized demands for network services. In order to adequately respond to the challenges posed to the network by the growing number and differentiated requirements of mobile subscribers, the radio access network (RAN), which serves as the entry for the service, is evolving from the homogeneous 4G long-term evolution (LTE) network to the 5G heterogeneous network (HetNet) \cite{odida2024evolution} where multiple macro and pico base stations (BS) coexist. These BSs are different in size, transmit power, carrier frequency, and coverage, and the former is responsible for providing wide-area signal coverage, while the latter are densely deployed in hotspots or in areas where the former's coverage is weak, to complement the wireless network and enhance its service capability.

The microwave band used by LTE BSs is a scarce and exhausted resource that can hardly accommodate the surge in wireless traffic. Therefore, 5G HetNets are simultaneously developing the millimeter-wave (mmWave) band \cite{ashraf20245g, xue2024survey}, whose bandwidth is far beyond the microwave band. Moreover, mmWave exhibits directional propagation characteristics, which can enhance the beamforming capability for wireless signals when combined with directional antennas. However, mmWave has higher propagation loss and poorer ability to bypass obstacles and is therefore sensitive to whether or not the communicating entities are at line-of-sight (LOS) to each other.

The iterative updating of network services and the resulting evolution of network infrastructure are announcing that traditional access strategies are no longer able to achieve the desired results in new scenarios. Specifically, the reference signal receiving power (RSRP) and received signal strength indicator (RSSI) are not a good reflection of the reachable communication rate in 5G HetNets \cite{semiari2016downlink}, and the disparity in transmit power between macro and pico BSs can easily lead to unbalanced association results in user association (UA) strategies based on these metrics \cite{khawam2020coordinated, liu2016user}. What's more, the directional propagation properties of mmWave change the interference structure from being ubiquitous to being dependent on UA results \cite{alizadeh2022reinforcement, liu2020user, alizadeh2020multi, alizadeh2019load}. On the other hand, traditional networks always assume that users would primarily receive data from the BS, many emerging services break this convention. For example, live streaming has a greater need for stable and sufficient uplink transmissions, while video calls and conferencing require both uplink and downlink data rates. The asymmetry of the uplink and downlink requirements of services makes the UA strategies originally designed for the downlink only no longer effective: they may associate the user to a far-away macro BS, where the user's transmit power is not sufficient to guarantee a minimum data rate in uplink direction considering its hardware limitation \cite{luo2014downlink}. In addition, conventional LTE BSs require synchronization of uplink and downlink switching points, otherwise, the disparity in transmit power between the BS and the mobile user equipment (MUE) may lead to irreconcilable cross-link interference. However, this is not the case for pico BSs in HetNets, as their transmit power is very close to that of the MUE, which allows pico BSs to independently adjust their uplink and downlink switching points to better adapt to the services they carry \cite{pedersen2016flexible, xu2023energy}. These new features provide strong grips on scheduling the network at a finer granularity but also bring challenges that more complex factors inevitably have to be taken into account when designing scheduling policies.

Since the proposal of 5G HetNets, how to adjust the access strategy to mitigate the above problems has aroused extensive investigations in the academic community. In this paper, we will analyze and compare the main ideas and limitations of previous research, explore the wireless access problems that still exist in the new network paradigm, and design effective solutions to drive 5G HetNet toward the vision of the Internet of Everything.

\subsection{Related Work} \label{sec:related work}

Some researchers focus on primitive HetNets where BSs differ in size and transmit power but still use the same communication bands. The authors in \cite{kim2023distributed} improved the minimum downlink data rate of the user through UA and bandwidth allocation (BA). They used global channel state information (CSI) to design a Lagrangian primal-dual approach that optimizes UA and BA alternatively. Additionally, deep reinforcement learning (DRL) was utilized to determine the dual variables to reduce the information exchange between BSs. A three-tier HetNet is considered in \cite{ye2013user}. UA and BS time slot allocation are performed to optimize a logarithmic utility function defined over the user's downlink data rate, which is a combinatorial optimization problem. The authors proposed a centralized method that turns it into convex optimization by relaxing the UA variables, and then solving it and rounding the result. A distributed UA method based on biased user data rates is also presented here, where the bias factors are obtained from Lagrangian dual decomposition. Although there is still some studies grounded in this scenario \cite{tam2016joint, fang2020energy, li2019user}, the scarce radio resources in the microwave bands make it difficult for this paradigm to become mainstream in 5G networks.

mmWave is believed to be the inevitable evolutionary direction for future networks. The authors in \cite{zhang2020optimal} explored the UA problem in mmWave networks to minimize the proportion of unassociated users under the constraint of BS connection capacity. The authors treated the interference of mmWave communications as zero and assume that users are satisfied as long as they are within the coverage area of the BS and in LOS. Based on this setting, the authors constructed a bipartite between users and BSs and find the maximum match of the graph to provide connections to more users. To reduce the complexity, the authors also proposed a heuristic method: Starting with users surrounded by fewer available BSs, the algorithm associates them with the least loaded one among the available BSs. The authors in \cite{alizadeh2019load} focused on the UA problem in dense mmWave networks, with the objective of optimizing the system downlink data rate, taking into account the user's single connection and BS data stream capacity constraints. They assumed that perfect instantaneous CSI is available, and the proposed algorithm continuously swaps the BS of the UE with the worst connection quality with the BSs of other UEs until the association scheme remains unchanged after a certain number of consecutive swaps. Unlike these studies above, which only consider a single connection for users, the authors in \cite{alizadeh2019load} assumed that each user can establish multiple connections at the same time. In this case, the authors modeled the UA problem of maximizing the downlink data rate as a multi-label classification issue and transformed it into a series of single-label classification problems. Subsequently, a graph model is constructed based on user location and network topology information, whose feature vectors are extracted and then classified by a machine learning classifier. These studies have pushed the exploration of mmWave communications, but the overly optimistic idea of ignoring mmWave interference in \cite{zhang2020optimal}, and the difficulty of realistically capturing the full CSI needed by \cite{alizadeh2019load}, as well as the fact that all of them have only considered mmWave BSs, are issues that make them still somewhat different from a future 5G HetNet.

The majority of the investigations are based on multi-band wireless scenarios where distinct BSs coexist. The authors in \cite{qiao2023joint} optimized UA and spectrum allocation (SA) to maximize the total uplink data rate for a multi-band HetNet, where the bands between small BSs are also orthogonal and the users here are immobile. The authors form users at the same BS into a coalition and traverse all users to switch coalitions for them so that the total uplink data rate of the related coalitions is increased until no more switching can be done. A three-tier HetNet is considered in \cite{khawam2020coordinated}, in which users are also immobile. The authors formulate the UA problem of maximizing the downlink rate given a SA result as a potential game, which is guaranteed to converge to an equilibrium point. The authors in \cite{liu2019joint} optimized the UA, power allocation (PA), and time slice allocation with the objective of weighted proportional fairness of downlink throughput, taking into account the single connection of users, the maximum transmit power of BSs, and the time slice demand of users as constraints. It is transformed into an alternating optimization of UA and PA under the assumption that mmWave interference is still omnidirectional. Thereby, the optimal UA is solved using the Lagrangian primal-dual by fixing the PA decision, while the PA is solved using the Newton-Raphson method by fixing the UA decision.

In addition to these mathematical optimization methods, machine learning based methods are also used. In \cite{deng2023gnn}, the interference relationship between users and nearby BSs is abstracted into an interference graph, and then the graph neural network is used to output the corresponding UA and beam selection decisions using it as input. The authors in \cite{sana2020multi} used a multi-agent deep reinforcement learning (MADRL) based UA method to enhance the downlink data rate of the system. Each user acts as an independent agent and makes association decisions based on local observations, and if ultimately there is a conflict in the association decisions, all agents are rewarded with zero or the total data rate otherwise. A similar approach appears in \cite{guo2020joint}, the authors adopt MAPPO to make UA decisions and power requests to promote the downlink throughput, which also acts as the reward for all agents. Since all agents use the same reward, both \cite{sana2020multi} and \cite{guo2020joint} theoretically require cooperative network environments. The authors in \cite{alizadeh2022reinforcement} proposed a semi-distributed UA method based on the multi-armed bandit (MAB). Each user maintains an MAB and selects its own association decision based on the upper confidence bound principle, and then the central controller aggregates the association decisions of all the users to form a temporary association scheme. The downlink data rates yielded by the temporary and historically optimal association schemes are compared, and the scheme that yields a larger downlink rate is adopted as the actual association scheme. Unlike the previously presented works, the authors in \cite{chaieb2022deep} additionally considered a minimum data rate requirement for the service. The authors use MADQL for decision-making, where the BS replaces the user to be an agent, and the reward of each agent is defined as its own total downlink data rate. These works have made valuable research on access strategies for 5G multi-band HetNets, but there are still some limitations. For example, most of them have not required the minimum data rate of the services, they have neither considered the coexistence of differentiated services nor do they consider the asymmetric uplink and downlink demands of evolving services.

\subsection{Motivation and Contributions} \label{sec:motivation and contributions}

In this paper, we aim to build on previous research and further bridge the gap between theory and reality. We focus on the wireless access problem for mobile users in 5G HetNets where macro and pico BSs coexist. They operate in different frequency bands, and pico BSs are allowed to use asynchronous uplink and downlink switching points. In addition, mobile users have differentiated and asymmetric uplink and downlink service requirements. In this case, we improve the sum of uplink and downlink data rates of the entire system by optimizing user association, switching point selection, and subchannel allocation. The main contributions of this work are summarized below:
\begin{enumerate}
	\item We present a more realistic 5G heterogeneous wireless network architecture in which LTE BSs operating in the sub-6 GHz spectrum provide wide-area signal coverage, and pico BSs operating in the mmWave spectrum are densely deployed to enhance coverage in hotspots and edge areas. Both of them adopt orthogonal frequency division multiplexing \cite{fang2020energy} and time division duplex \cite{wang2005interference} technologies to establish wireless connections, and each pico BS is allowed to use an independent uplink and downlink switching point. In addition, we consider mobile user terminals, where different users may request diverse services, and the uplink and downlink data rate demands for the same service may be asymmetric.
	
	\item Unlike existing work that optimistically ignores mmWave interference or pessimistically treats it as omnidirectional propagation, we employ the more accurate propagation model of main and side beams: entities in the range of the main beam receive interference signals, while those in the range of the side beam ignore interference signals. Moreover, instead of crude mobility models such as random walk and random way point, we adopt the truncated Levy walk (TLW), which is empirically proven to be more consistent with human mobility patterns \cite{rhee2011levy}, to be the user mobility model.
	
	Based on these settings, we formulate a mathematical model for maximizing the uplink and downlink data rates of the system, which is integer nonlinear programming (INLP), under the constraints of the maximum capacity of the BS, single association of users, the feasibility of switching points, minimum uplink/downlink data rate requirement of users and communications continuity. Through a reduction from the Max-cut problem \cite{goemans1994879} to a subproblem of the original problem, we prove that the present problem is NP-hard.
	
	\item To solve the problem efficiently, we decouple it into the subproblems of user association, switching point selection, and subchannel allocation. The user association problem is modeled as an optimal matching problem of a bipartite with well-designed edge weights, and is solved using the Kuhn-Munkres algorithm. The switching point selection problem balances the uplink and downlink pseudo supply-demand ratios of the users served by the BS as much as possible. The subchannel allocation problem is proved to be equivalent to a generalized Max-cut problem of the interference graph, which is NP-hard. Hence, we transform it into a graph partitioning problem and solve it efficiently using spectral clustering at the expense of some optimality.
	
	\item In order to prove the effectiveness of the proposed algorithm, we perform extensive simulations to compare the performance of our algorithm with other baselines in the literature in terms of overall data rate, effective data rate, and the number of satisfied users. The results show that our algorithm has obvious advantages in these aspects.
\end{enumerate}

\subsection{Organization} \label{sec:organization}

The remainder of the paper is organized as follows. \secref{sec:system architecture} elaborates the system architecture and formulates the mathematical model. \secref{sec:algorithm design} analyzes the intractability of the problem and proposes an algorithm to solve it. \secref{sec:numerical simulations} presents the simulation results and compares the performance of different algorithms. \secref{sec:conclusion and future works} concludes the paper and discusses future works.

\section{System Architecture} \label{sec:system architecture}

In this section, we elaborate in detail on user association and channel allocation issues in heterogeneous wireless networks, explaining the roles of different entities in the system and modeling the optimization problem.

\begin{figure}[!th]
	\centering
	\includegraphics[scale=0.4]{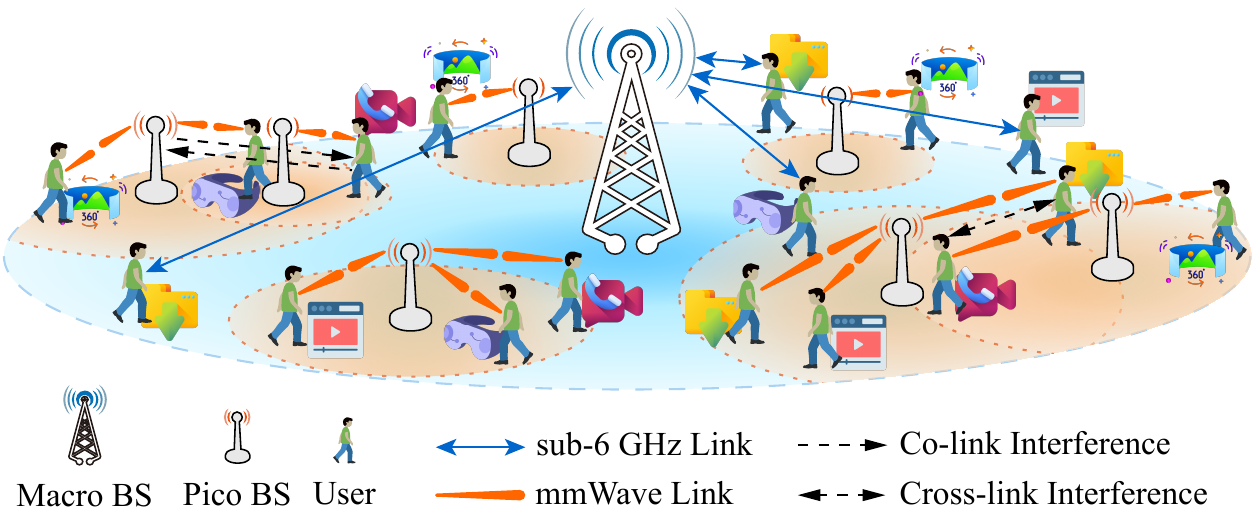}
	\caption{Network structure.}
	\label{fig:scenario}
\end{figure}

\subsection{Network Structure}

As shown in \figref{fig:scenario}, this work considers HetNets where macro and pico BSs coexist to provide wireless communication to various mobile users.

\begin{itemize}
	\item \textbf{Macro BS (MBS).} They are inherited from the LTE network and operate in the sub-6 GHz frequency band with omnidirectional antennas, mainly to provide wide area wireless signal coverage in 5G and beyond networks.
	
	\item \textbf{Pico BS (PBS).} This type of BS operates in the millimeter wave frequency band, thus, providing a larger communication frequency band compared to an MBS. They are equipped with directional antennas, which, combined with the directional propagation characteristics of millimeter waves, can achieve better antenna gain.
	
	However, the relatively smaller size, transmission power, and greater propagation loss of millimeter waves make the coverage range of PBSs much smaller than that of MBSs. Therefore, they are often densely deployed within macro cells as a supplement to the coverage of MBSs to enhance the communication capacity of the entire wireless network.
	
	\item \textbf{Mobile User Equipment (MUE).} They are mobile end devices, such as smartphones and tablets, that people use to access wireless network services. Due to the consideration of portability and battery life, they are also usually designed to be of small size and low power consumption, which is similar to PBSs.
	
	The physical laws require that the length of the antenna that can effectively receive wireless signals needs to be positively correlated with the carrier wavelength of the signal \cite{balanis2016antenna}. Therefore, the size of the antenna that can normally receive millimeter wave signals will only be smaller. This means that the end devices can simultaneously equipped with two types of antennas \cite{alizadeh2020multi, semiari2016downlink}, omnidirectional and directional, to support connections with both MBS and PBS.
	
	In addition, different services require different data rates to offer satisfying quality, and the uplink and downlink data rate requirements for the same service may also be asymmetric \cite{luo2014downlink}.
\end{itemize}

We set both BSs in \figref{fig:scenario} to serve multiple users simultaneously via orthogonal frequency-division multiple access (OFDMA) and adopt time division duplex (TDD) to transmit uplink and downlink data for each associated MUE \cite{wang2005interference}. Since the MUE is equipped with two types of antennas, they can establish a connection with either a macro or a pico BS, but at the same time slot, they can only communicate with one BS.

Due to the disparity in transmit power between MBSs and MUEs, all MBSs are synchronized at the uplink and downlink switching points within each time slot to avoid cross-link interference \cite{pedersen2016flexible}. On the contrary, each PBS is allowed to determine its switching point independently, because it has a similar transmit power to the user, making it not prone to severe cross-link interference.

During the communication process, users may move at any time and will carry their position information, real-time velocity information, and uplink/downlink data rate requirements when initiating communication requests. If their uplink or downlink data rate requirements are no longer met due to fluctuations in the wireless link, the users will initiate a reassociation request in the next time slot. Otherwise, they will maintain the current connection and the BS will not actively replace or interrupt communication with such users.

\subsection{Problem Formulation}

Before elaborating on the model of the problem, we list the major notations in \tabref{tab:notations} for quick reference.

\begin{table}[!tb]
	\begin{center}
		\caption{Notations}
		\label{tab:notations}
		\begin{tabular}{ p{5em} p{22em} }
			\hline
			$\mathcal{U}, \mathcal{U}^t$ & set of MUEs and those ask for reassociation at time slot~$t$ \\
			$\mathcal{B}, \mathcal{B}_\text{m}, \mathcal{B}_\text{p}$ & set of BSs, MBSs and PBSs, respectively \\
			$\mathcal{C}_b, \mathcal{C}_\text{m}, \mathcal{C}_\text{p}$ & set of subchannels of BS~$b$, Macro and PBS \\
			$W_b, W_\text{m}, W_\text{p}$ & subchannel bandwidth of BS~$b$, Macro and PBS \\
			$b^t_u$ & the BS that serves MUE~$u$ in time slot~$t$ \\
			$\tau_t$ & subslot within time slot~$t$ \\
			$x_{b,u}^{c,t}$ & binary variable that takes 1 when BS~$b$ serves MUE~$u$ with subchannel~$c$ during time slot~$t$ \\
			$d_{b,u}^{t}$ & distance between BS~$b$ and MUE~$u$ during slot~$t$ \\
			$G_b, G_u$ & antenna directivity of BS~$b$/UE~$u$ 
			\\
			$g_{b,u}^{c,t}, g_{u,b}^{c,t}$ & directional gain of BS~$b$/UE~$u$ to MUE~$u$/BS~$b$ on subchannel~$c$ in time slot~$t$ \\
			$\theta_b, \theta_u$ & beam width of the directional antenna of BS~$b$/UE~$u$ \\
			$\vartheta_b, \vartheta_u$ & sector width of BS~$b$/UE~$u$ \\
			$\boldsymbol{\psi}_{b,u}^{t}, \boldsymbol{\psi}_{u,b}^{t}$ & direction of the boresight from BS~$b$/UE~$u$ to MUE~$u$/BS~$b$ during time slot~$t$ \\
			$L_{b,u}^{t}$ & path loss between BS~$b$ and MUE~$u$ in time slot~$t$ \\
			$n_\text{L}^{\text{LTE}}, n_\text{N}^{\text{LTE}}$ & path loss exponent of LOS/NLOS sub-6 GHz link \\
			$n_\text{L}^{\text{mmW}}, n_\text{N}^{\text{mmW}}$ & path loss exponent of LOS/NLOS mmWave link 
			\\
			$P_b, P_u$ & transmitting power of BS~$b$/UE~$u$ on each subchannel \\
			$\lambda_\text{o}$, $\mathbb{E}_\text{L}$ & density and average length of obstacles, respectively \\
			$T_{\text{r}}$ & residence time after a single Levy flight \\
			$T_{b,u}$ & duration for beam alignment between BS~$b$ and MUE~$u$ \\
			$T_{\text{f}}, T_{\text{p}}, T_\text{s}$ & duration for a single flight, pilot transmission and time slot, respectively 
			\\
			$\beta_\text{f}, \beta_\text{r}$ & exponent of the length of a Levy flight and the residence time between Levy flights, respectively \\
			$N_\text{s}$ & number of subslot within each time slot 
			\\
			$N^{\text{TDD}}_{t,b}$ & subslot after which BS~$b$ switches to downlink in slot~$t$ \\
			$\mathcal{N}^{\text{TDD}}$ & set of optional TDD switching points \\
			$\gamma^{c,\tau_t}_{b,u}, \gamma^{c,\tau_t}_{u,b}$ & SINR received by MUE~$u$/BS~$b$ on subchannel~$c$ during subslot~$\tau_t$ from BS~$b$/UE~$u$ \\
			$I^{c,\tau_t}_{b,u}, I^{c,\tau_t}_{u,b}$ & interference suffered by MUE~$u$/BS~$b$ when receives data from BS~$b$/UE~$u$ on subchannel~$c$ during subslot~$\tau_t$ \\
			$r^{c,t}_{b,u}, r^{c,t}_{u,b}$ & perceived downlink/uplink rate between BS~$b$ and MUE~$u$ on subchannel~$c$ during time slot~$t$ \\
			$R^{\text{DL}}_u, R^{\text{UL}}_u$ & minimum downlink/uplink rate requirement of MUE~$u$ \\
			\hline
		\end{tabular}
	\end{center}
	\vspace{-15pt}
\end{table}

\subsubsection{Mobility pattern}

We denote the set of all MUEs as $\mathcal{U}$ and specify those that ask for reassociation at time slot~$t$ as $\mathcal{U}^t$. Their mobility pattern follows TLW, which is shown to have some statistical similarity to human walks in \cite{rhee2011levy}. The distance traveled by an MUE in a single walk is called a flight. The length of each flight and pause time between successive flights can be calculated by:
\begin{equation}\label{eq:levy walk length}
	l = \frac{y}{\left| z \right|^{1 / \beta}} ,
\end{equation}
where $\beta$ is a parameter that regulates the distribution of flight lengths. $y$ and $z$ are random variables that follow $N\left(0, \sigma^2_y\right)$ and $N\left(0, \sigma^2_z\right)$, respectively \cite{yang2010nature}. $\sigma_y$ and $\sigma_z$ are defined in \eqref{eq:levy SD}, where $\Gamma\left(\cdot\right)$ is the Gamma Function.
\begin{equation}\label{eq:levy SD}
	\sigma_y = \left\{\frac{\Gamma\left(1 + \beta\right) \sin\left(\pi \beta / 2\right)}{\Gamma\left[\left(1 + \beta\right) / 2\right] \beta 2^{\left(\beta - 1\right) / 2}}\right\}^{1 / \beta}, \quad \sigma_z = 1 ,
\end{equation}
\begin{equation}\label{eq:levy walk time}
	T_\text{f} = kl^{1-\rho}.
\end{equation}

The time needed to finish a flight is believed to be related to the flight length and is formulated in \eqref{eq:levy walk time}, where $k$ and $\rho$ are parameters that take the values of 30.55 and 0.89 respectively when flight length is smaller than 500 m. Otherwise, the two take 0.76 and 0.28, respectively \cite{rhee2011levy}. It reflects a phenomenon that people will always prefer transportation to walking for longer journeys, leading to an increase in both travel time and velocity as the distance of the flight increases.


\subsubsection{Interference structure}

The network is treated as a time-slotted system, where the channel state is static within time slots, but changes across time slots. The set of all BSs is denoted as $\mathcal{B}$, consisting of two disjoint sets $\mathcal{B}_\text{m}$ and $\mathcal{B}_\text{p}$ for MBSs and PBSs, respectively. The two types of base stations operate at different carrier frequencies, denoted as $f_\text{LTE}$ and $f_\text{mmW}$, respectively. The optional subchannels of BS~$b$ are $\mathcal{C}_b$ and the bandwidth of each subchannel is $W_b$.

The transmit power of BS~$b$ and MUE~$u$ are denoted as $P_b$ and $P_u$, respectively. We define a binary variable $x^{c,t}_{b,u}$ that takes 1 when BS~$b$ serves MUE~$u$ with subchannel~$c$ during time slot~$t$, and otherwise 0. For each PBS~$b$ and MUE~$u$ that communicate with mmWave band, we use $G_b$ and $G_u$ to represent their antenna directivities, and $\theta_b$ and $\theta_u$ to denote the main beam width of the directional antennas, respectively. The communicating parties of a mmWave link are considered to align their respective directional antennas in their LOS direction using beam alignment \cite{shokri2015beam} and beam tracking \cite{guo2020joint} techniques.

Now, we formulate the antenna gain on subchannel~$c$ in time slot~$t$ from BS~$b$ to BS~$b^\prime$, denoted as $g^{c,t}_{b, b^\prime}$, MUE~$u$ to MUE~$u^\prime$, denoted as  $g^{c,t}_{u, u^\prime}$, BS~$b$ to MUE~$u$, denoted as $g^{c,t}_{b,u}$, and MUE~$u$ to BS~$b$, denoted as $g^{c,t}_{u,b}$, as follows:
\begin{equation}\label{eq:antenna gain b2b}
	g_{b,b^\prime}^{c,t} = \left\{
	\begin{aligned}
		G_b, & \quad \begin{split}
			\text{if } & b, \! b^\prime \! \in \! \mathcal{B}_\text{p} \text{ and } \exists u \! \in \! \mathcal{U} \text{ that lets } x^{c,t}_{b,u}\!=\!1 \\
			& \text{and } \measuredangle\left(\boldsymbol{\psi}_{b,u}^{t}, \boldsymbol{\psi}^{t}_{b,b^\prime}\right) < \theta_b / 2
		\end{split} \\
		1, & \quad \text{else-if } b \in \mathcal{B}_\text{m} \text{ and } b^\prime \in \mathcal{B}_\text{m} \\
		0, & \quad \text{otherwise}
	\end{aligned}
	\right. ,
\end{equation}
\begin{equation}\label{eq:antenna gain u2u}
g_{u,u^\prime}^{c,t} = \left\{
\begin{aligned}
	G_u, & \quad \begin{split}
		\text{if } & u, u^\prime \text{ are mmWave users} \text{ and} \\
		& \exists b \! \in \! \mathcal{B}_\text{p} \text{ that lets } x^{c,t}_{b,u}\!=\!1 \text{ and} \\
		& \measuredangle\left(\boldsymbol{\psi}^{t}_{u,b}, \boldsymbol{\psi}^{t}_{u,u^\prime}\right) < \theta_u / 2
	\end{split} \\
	1, & \quad \text{else-if } u, u^\prime \text{ are LTE users} \\
	0, & \quad \text{otherwise}
\end{aligned}
\right. , \hspace*{29pt}
\end{equation}
\begin{equation}\label{eq:antenna gain b2u}
g_{b,u}^{c,t} = \left\{
\begin{aligned}
	G_b, & \quad \begin{split}
		\text{if } & b \! \in \! \mathcal{B}_\text{p} \text{ and } \exists u^\prime \in \mathcal{U} \text{ that lets } x^{c,t}_{b,u^\prime}\!=\!1 \\
		& \text{and } \measuredangle\left(\boldsymbol{\psi}_{b,u}^{t}, \boldsymbol{\psi}^{t}_{b,u^\prime}\right) < \theta_b / 2
	\end{split} \\
	1, & \quad \text{else-if } b \in \mathcal{B}_\text{m} \text{ and } c \in \mathcal{C}_b \\
	0, & \quad \text{otherwise}
\end{aligned}
\right. ,
\end{equation}
\begin{equation}\label{eq:antenna gain u2b}
	g_{u,b}^{c,t} = \left\{
	\begin{aligned}
		G_u, & \quad \begin{split}
			\text{if } & b \! \in \! \mathcal{B}_\text{p} \text{ and } \exists b^\prime \! \in \! \mathcal{B}_\text{p} \text{ that lets } x^{c,t}_{b^\prime,u}\!=\!1 \\
			& \text{and } \measuredangle\left(\boldsymbol{\psi}^{t}_{u,b}, \boldsymbol{\psi}^{t}_{u,b^\prime}\right) < \theta_u / 2
		\end{split} \\
		1, & \quad \text{else-if } b \in \mathcal{B}_\text{m} \text{ and } c \in \mathcal{C}_b \\
		0, & \quad \text{otherwise}
	\end{aligned}
	\right. ,
\end{equation}
where $\boldsymbol{\psi}^{t}_{b,u}$ is the boresight direction from BS~$b$ to MUE~$u$ in time slot~$t$, and the operator $\measuredangle$ measures the angle of the given vectors. $g^{c,t}_{\cdot, \cdot}$ indicates that the antenna gain between entities using the sub-6 GHz band is always 1. However, for those using the mmWave band, the antenna gain depends on the antenna directivity: it is effective only if the entities are on the same subchannel and one is within the main beam coverage of the other; otherwise, it is 0, meaning the side beam signal is ignored.

Within a time slot~$t$, the period is further divided into $N_\text{s}$ subslots, which is precisely the granularity of the uplink and downlink switching points. The characteristics of uplink and downlink asynchronous switching of the PBSs make the definition of the communication rate during a slot rely on that down to each subslot~$\tau_t$. We use $r^{c,\tau_t}_{b,u}$ to represent the instantaneous downlink rate from BS~$b$ to MUE~$u$ with subchannel~$c$ in subslot~$\tau_t$ within slot~$t$. It is formulated in \eqref{eq:Shannon}, where $\gamma^{c, \tau_t}_{b,u}$ is the Signal to Interference plus Noise Ratio (SINR) of the downlink signal from BS~$b$ to MUE~$u$ at that time and is defined in \eqref{eq:SINR}. We omit the presentation of $\gamma^{c, \tau_t}_{u,b}$, $\gamma^{c, \tau_t}_{b,b^\prime}$ and $\gamma^{c, \tau_t}_{u,u^\prime}$, since they are in the same form as $\gamma^{c, \tau_t}_{b,u}$.
\begin{equation}\label{eq:Shannon}
	r^{c,\tau_t}_{b,u} = W_b \log_2\left(1 + \gamma^{c, \tau_t}_{b,u} \right) ,
\end{equation}
\begin{equation}\label{eq:SINR}
	\gamma^{c,\tau_t}_{b,u} = \frac{P_b g^{c,t}_{b,u} g^{c,t}_{u,b} L^{-1}_{t, b, u}}{I_{b,u}^{c,\tau_t} + W_b N_0} .
\end{equation}

$L_{t, b, u}$ and $I_{b,u}^{c,\tau_t}$ above indicate the path loss between BS~$b$ and MUE~$u$ at slot~$t$ and the instantaneous interference suffered by the signal from BS~$b$ to MUE~$u$ at subslot~$\tau_t$, respectively. $N_0$ is the power spectral density of additive Gaussian White noise.
\begin{equation}\label{eq:FSPL}
	L_{t, b, u} = \left(\frac{4 \pi d^{t}_{b,u} f_b}{\text{c}}\right)^n .
\end{equation}

We adopt the free-space path loss, which is depicted in \eqref{eq:FSPL}, to measure the strength of path loss, where $d^t_{b,u}$ is the distance between BS~$b$ and MUE~$u$ at slot~$t$, $f_b$ is the carrier frequency of BS~$b$, nonitalic $\rm{c}$ is the speed of light, and $n$ is called the Path Loss Exponent (PLE). The value of PLE is highly dependent on the carrier frequency and the line-of-sight state between the BS and MUE. Many studies give typical values of PLE for sub-6 GHz \cite{blackard1992path, nwelih2022method} and mmWave band \cite{rappaport2013millimeter} in Line-of-Sight (LOS) and Non-Line-of-Sight (NLOS) states by fitting measurements in real environments.

We follow the results of these studies and focus on two states of wireless channels, i.e., LOS and NLOS. The probability of the channel being in an LOS state is related to the distance between the communicating parties \cite{bai2012using} and is given as follows:
\begin{equation}
	P^{\text{LOS}}_{t, b, u} = \exp\left(-\frac{2 \lambda_{\text{o}} \mathbb{E}_\text{L} d^{t}_{b,u}}{\pi}\right), 
\end{equation}
where $\lambda_{\text{o}}$ and $\mathbb{E}_\text{L}$ represent the density and average length of obstacles in the area, respectively.

Without loss of generality, we assume that uplink transmissions always come first during each time slot, followed by downlink ones. To explicitly formulate the instantaneous interference $I_{b,u}^{c,\tau_t}$, we let $N^{\text{TDD}}_{t,b}$ to be the subslot after which BS~$b$ switches to downlink in time slot~$t$, and $\mathbb{I}^x_y$ to be an indicator that takes the value of 1 when $x \leq y$ and 0 otherwise, while $\overline{\mathbb{I}}^x_y$ does the opposite. Hence, $I_{b,u}^{c,\tau_t}$ can be calculated using \eqref{eq:interference suffered}.
\begin{equation}\label{eq:interference suffered}
	\resizebox{0.89\linewidth}{!}{$
	\begin{aligned}
		I_{b,u}^{c,\tau_t} \!\! = \!\! \sum_{ \substack{b^\prime \neq b, 
			\\
			u^\prime \neq u} } \!\! x^{c,t}_{b^\prime\!,u^\prime} \!\! \left( \!\! \mathbb{I}^{\tau_t}_{N^{\text{TDD}}_{t,b^\prime}} \frac{ P_{u^\prime} g^{c,t}_{u^\prime\!, u} g^{c,t}_{u,u^\prime}}{L_{t, u^\prime\!, u}} \! + \!
		\overline{\mathbb{I}}^{\tau_t}_{N^{\text{TDD}}_{t,b^\prime}} \frac{ P_{b^\prime} g^{c,t}_{b^\prime\!, u} g^{c,t}_{u,b^\prime}}{L_{t, b^\prime\!, u}} \!\! \right)
	\end{aligned}\!.$}
\end{equation}

Based on the instantaneous uplink and downlink data rate, i.e., $r^{c,\tau_t}_{u,b}$ and $r^{c,\tau_t}_{b,u}$, of a subslot~$\tau_t$, the perceived data rate \cite{andrews2014overview} of the whole time slot~$t$, denoted as $r^{c,t}_{u,b}$ and $r^{c,t}_{b,u}$ for uplink and downlink, respectively, is formulated as follows:
\begin{equation}
	r^{c,t}_{u,b} = \left(1 - \frac{T_{b,u}}{T_\text{s}}\right) \frac{1}{N_\text{s}} \sum_{\tau_t = 1}^{N^\text{TDD}_{t,b}}{r^{c,\tau_t}_{u,b}} ,
\end{equation}
\begin{equation}
	r^{c,t}_{b,u} = \left(1 - \frac{T_{b,u}}{T_\text{s}}\right) \frac{1}{N_\text{s}} \sum_{\tau_t = N_{t,b}^{\text{TDD}} + 1}^{N_\text{s}}{r^{c,\tau_t}_{b,u}} ,
\end{equation}
where $T_\text{s}$ is the length of the time slot and $T_{b,u}$ is the time taken by beam alignment between BS~$b$ and MUE~$u$, which can be measured with \eqref{eq:beam alignment}. $\vartheta_b$ and $\vartheta_u$ are search sector width of BS~$b$ and MUE~$u$, respectively. The operator $\lceil \cdot \rceil$ indicates the ceiling function and $T_\text{p}$ is the pilot transmission time \cite{shokri2015beam}.
\begin{equation}\label{eq:beam alignment}
	T_{b,u} = \left\{\begin{aligned}
		0, \qquad\qquad & \text{if } b \in \mathcal{B}_\text{m} \\
		\bigg\lceil \frac{\vartheta_b}{\theta_b} \bigg\rceil \bigg\lceil \frac{\vartheta_u}{\theta_u} \bigg\rceil T_{\text{p}}, \quad & \text{if } b \in \mathcal{B}_\text{p}
	\end{aligned}\right. .
\end{equation}

\subsubsection{Problem model}

In this work, we aim to optimizing the connections between BSs and MUEs and adjust the up/down switching points of BSs to improve the communication data rates of mobile users in wireless HetNets. With the notations elaborated before, the mathematical formulation of the optimization problem in time slot~$t$ can be written as (P1).
\begin{align}
	(P1) \quad \underset{\boldsymbol{x}^t, \boldsymbol{N}^{\text{TDD}}_t}{\text{maximize}} \quad \sum_{u \in \mathcal{U}}{\sum_{b \in \mathcal{B}}{\sum_{c \in \mathcal{C}_b}{x^{c,t}_{b,u} \left(r^{c,t}_{u,b} + r^{c,t}_{b,u}\right)}}} \label{eq:sum data rate}\\
	\text{s.t.} \quad \sum_{u \in \mathcal{U}}{x^{c,t}_{b,u}} \leq 1, \quad \forall b \in \mathcal{B}, \forall c \in \mathcal{C}_b, \label{eq:serve at most one} \\
	\sum_{b \in \mathcal{B}}{\sum_{c \in \mathcal{C}_b}{x^{c,t}_{b,u}}} \leq 1, \quad \forall u \in \mathcal{U}, \label{eq:accept at most one} \\
	\sum_{b \in \mathcal{B}}{\sum_{c \in \mathcal{C}_b}{x^{c,t}_{b,u} r^{c,t}_{u,b}}} \geq R^{\text{UL}}_u \sum_{b \in \mathcal{B}}{\sum_{c \in \mathcal{C}_b}{x^{c,t}_{b,u}}}, \quad \forall u \in \mathcal{U}^t, \label{eq:UL demand} \\
	\sum_{b \in \mathcal{B}}{\sum_{c \in \mathcal{C}_b}{x^{c,t}_{b,u} r^{c,t}_{b,u}}} \geq R^{\text{DL}}_u \sum_{b \in \mathcal{B}}{\sum_{c \in \mathcal{C}_b}{x^{c,t}_{b,u}}}, \quad \forall u \in \mathcal{U}^t, \label{eq:DL demand} \\
	x^{c,t}_{b,u} = x^{c,t-1}_{b,u}, \quad \forall u \in \mathcal{U} \backslash \mathcal{U}^t, \forall b \in \mathcal{B}, \forall c \in \mathcal{C}_b, \label{eq:communication continuity} \\
	x^{c,t}_{b,u} \in \left\{0, 1\right\}, \quad \forall u \in \mathcal{U}, \forall b \in \mathcal{B}, \forall c \in \mathcal{C}_b, \label{eq:binary variable} \\
	N^{\text{TDD}}_{t,b} \in \mathcal{N}^{\text{TDD}}, \quad \forall b \in \mathcal{B}, \label{eq:optional switching point} \\
	N^{\text{TDD}}_{t,b} = N^{\text{TDD}}_{t,b^\prime}, \quad \forall b, b^\prime \in \mathcal{B}_\text{m}. \label{eq:synchronized MBS}
\end{align}

The objective in \eqref{eq:sum data rate} corresponds to the overall equivalent data rate obtained by all MUEs at time slot~$t$, including both uplink and downlink. $\boldsymbol{x}^t$ and $\boldsymbol{N}^{\text{TDD}}_t$ are decision vectors related to association and up/down switching, respectively. Constraint \eqref{eq:serve at most one} means that each subchannel of any BS can only serve at most one MUE during each time slot. Similarly, constraint \eqref{eq:accept at most one} means that each MUE can establish at most one connection with BSs through a single subchannel in each time slot. $R^{\text{UL}}_u$ and $R^{\text{DL}}_u$ in \eqref{eq:UL demand} and \eqref{eq:DL demand} are uplink and downlink rate demand of MUE~$u$, respectively. Hence, constraints \eqref{eq:UL demand} and \eqref{eq:DL demand} indicate that the equivalent data rates of MUEs should not be less than their demands in both directions. It is worth noting that the network side does not know the wireless channel states in advance, meaning that relying solely on the decision-maker cannot guarantee the results meet the two constraints. Therefore, we delegate this task to the user side. Specifically, the user always accepts the association decisions conveyed from the network side, but if the actual data rates obtained do not meet its requirements, the user will abort the data transmission and send a reconnection request at the beginning of the next time slot. Constraint \eqref{eq:communication continuity} indicates that the network cannot alter the serving BS and subchannel of an MUE that not ask for reassociation at the current time slot. Constraint \eqref{eq:binary variable} is an ordinary binary constraint. Constraint \eqref{eq:optional switching point} requires that the switching point of each BS can only drawn from the optional set, denoted as $\mathcal{N}^{\text{TDD}} = \left\{1, \dots, N_\text{s}-1\right\}$. Constraint \eqref{eq:synchronized MBS} means that the switching points of all MBSs should be kept consistent to avoid unacceptable cross-link interference.

The decision variables in (P1) are discrete variables and multiplication and division operations between them occur in the objective as well as in some constraints. These put the problem in the category of INLP, which is usually NP-hard and cannot be solved optimally in polynomial time. Indeed, we will prove that even a subproblem of (P1) remains NP-hard in the next section. Therefore, our basic principle for solving this problem is to search for a near-optimal solution quickly instead.

\section{Algorithm Design} \label{sec:algorithm design}

Considering the intractability of (P1), we solve it by decoupling it into several subproblems, namely, user association, switching point selection, and subchannel allocation, according to the connection establishment process. In this section, we describe the purpose and importance of these subproblems, analyze their characteristics, and propose the corresponding solutions.

\subsection{User Association}\label{sec:user association}

Finding the best BS is the primary and most important step in delivering wireless services to users. The quality of the association directly affects the upper limit of the data rate available to users and also determines the complexity of the interfering relationships throughout the network.

To formally define the user association problem, we use $x^t_{b,u}$, with a slight abuse, to represent whether MUE~$u$ is associated with BS~$b$ at time slot~$t$ or not, and introduce $w^t_{b,u}$ to denote the connection quality between BS~$b$ and MUE~$u$ at time slot~$t$, which is a physical quantity that is not affected by association variables. The goal of this subproblem is to obtain a near-optimal total connection quality under the constraints that each MUE can only connect to at most one BS, each BS can accommodate users up to the number of its subchannels and communication continuity is not compromised. We show the corresponding model in (P2), where the operator $\left|\cdot\right|$ measures the cardinality of the given set.
\begin{align}
	(P2) \quad \underset{\boldsymbol{x}^t}{\text{maximize}} \quad \sum_{u \in \mathcal{U}}{\sum_{b \in \mathcal{B}}{x^t_{b,u} w^t_{b,u}}} \label{eq:sum weights}\\
	\text{s.t.} \quad \sum_{u \in \mathcal{U}}{x^t_{b,u}} \leq \left|\mathcal{C}_b \right|, \quad \forall b \in \mathcal{B}, \label{eq:not exceeding subchannels} \\
	\sum_{b \in \mathcal{B}}{x^t_{b,u}} \leq 1, \quad \forall u \in \mathcal{U}, \label{eq:associate at most one} \\
	x^t_{b,u} = x^{t-1}_{b,u}, \quad \forall u \in \mathcal{U} \backslash \mathcal{U}^t, \forall b \in \mathcal{B}, \label{eq:association continuity} \\
	x^t_{b,u} \in \left\{0, 1\right\}, \quad \forall u \in \mathcal{U}, \forall b \in \mathcal{B}. \label{eq:binary association}
\end{align}

\begin{proposition}\label{prop:UA}
	The problem (P2) can be accurately and efficiently solved when $w^t_{b,u}$ is a constant value.
\end{proposition}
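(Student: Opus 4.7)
The plan is to reduce (P2) to a maximum-weight bipartite matching instance with capacities, and then invoke the Kuhn--Munkres algorithm, which runs in polynomial time precisely when the edge weights are numerical constants independent of the assignment variables.

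First I would eliminate the continuity constraint \eqref{eq:association continuity}: for every $u \in \mathcal{U}\setminus\mathcal{U}^t$, the values $x^{t-1}_{b,u}$ are already fixed, so their contribution to \eqref{eq:sum weights} is a known constant and their assignments merely consume a fixed amount of each BS's capacity. Letting $n_b = \sum_{u\in\mathcal{U}\setminus\mathcal{U}^t} x^{t-1}_{b,u}$ denote the pre-reserved slots of BS~$b$, the residual problem is to associate only the users in $\mathcal{U}^t$ against the reduced capacities $|\mathcal{C}_b|-n_b$ subject to \eqref{eq:associate at most one} and \eqref{eq:binary association}, with the reduced form of \eqref{eq:not exceeding subchannels}.

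Next I would construct a bipartite graph $G=(\mathcal{U}^t \cup \mathcal{S}, E)$, where $\mathcal{S}$ contains $|\mathcal{C}_b|-n_b$ identical ``slot'' copies of each BS $b\in\mathcal{B}$; each user $u\in\mathcal{U}^t$ is connected to every slot of every BS $b$ by an edge of weight $w^t_{b,u}$. To turn the ``$\le 1$'' constraints on both sides into a perfect-matching instance (which is the canonical input for Kuhn--Munkres), I would symmetrically pad the smaller side with dummy vertices whose incident edges carry weight $0$; matching a real node to a dummy is interpreted as that node being unassigned. Because all slots of the same BS are indistinguishable in weight and the objective is additive, there is a value-preserving bijection between feasible $\boldsymbol{x}^t$ of the residual (P2) and perfect matchings of $G$; therefore the maximum-weight perfect matching yields an optimal $\boldsymbol{x}^t$, and Kuhn--Munkres delivers it in $O(|\mathcal{U}^t \cup \mathcal{S}|^3)$ time.

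The step I expect to require the most care is the translation from the inequality-constrained formulation back and forth to the perfect-matching instance: I would need to verify that the zero-weight dummies do not introduce spurious improvements (any optimal matching can be assumed to prefer a real edge with nonnegative weight over a dummy edge), and that the interchangeability of slots from the same BS is preserved so that an optimal matching unambiguously induces a legitimate $\boldsymbol{x}^t$. I should also highlight where the hypothesis is used: if $w^t_{b,u}$ depended on $\boldsymbol{x}^t$ itself (e.g., through interference coupling), the edge weights would not be well-defined inputs to Kuhn--Munkres and the reduction would break; constancy is exactly what makes the problem a standard assignment instance and hence polynomially solvable to optimality.
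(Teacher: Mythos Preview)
Your proposal is correct and follows essentially the same route as the paper: eliminate the continuity-constrained users by pre-reserving their BS slots, replicate each BS into as many vertices as it has remaining subchannels, and solve the resulting bipartite optimal-matching instance with the Kuhn--Munkres algorithm. The paper's proof is terser on the dummy-padding and on where the constancy hypothesis enters, but the reduction and the algorithmic conclusion are identical.
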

\begin{proof}
	We abstract all the BSs and MUEs in the wireless network into a bipartite, where each MUE corresponds to a vertex in the left vertex set and each BS corresponds to a series of vertices equal to the number of its subchannels in the right vertex set. The edge linking different sides of the graph possesses a weight that is exactly the connection quality of its endpoints.
	
	Based on the bipartite, we further remove the vertices that correspond to ongoing MUEs and one of their neighbor vertices that belong to the actual serving BS for each. The edges related to those vertices are also eliminated.
	
	\begin{align}
		\underset{\boldsymbol{x}^t \backslash \left\{x^t_{b,u} | u \in \mathcal{U} \backslash \mathcal{U}^t \right\}}{\text{maximize}} \quad \sum_{u \in \mathcal{U}^t}{\sum_{b \in \mathcal{B}}{x^t_{b,u} w^t_{b,u}}} \label{eq:sum weights of new MUEs} \\
		\text{s.t.} \quad \sum_{u \in \mathcal{U}}{x^t_{b,u}} \leq \left|\mathcal{C}^t_b \right|, \quad \forall b \in \mathcal{B}, \label{eq:not exceeding idle subchannels} \\
		\sum_{b \in \mathcal{B}}{x^t_{b,u}} \leq 1, \quad \forall u \in \mathcal{U}^t, \\
		x^t_{b,u} \in \left\{0, 1\right\}, \quad \forall u \in \mathcal{U}^t, \forall b \in \mathcal{B}. \label{eq:binary association for new MUEs}
	\end{align}
	
	With the transformation illustrated above, the optimal solution of (P2) that correspond to ongoing MUEs is determined. Meanwhile, those for other MUEs can be addressed from \eqref{eq:sum weights of new MUEs} to \eqref{eq:binary association for new MUEs}, where $\mathcal{C}^t_b$ is the set of idle subchannels of BS~$b$ at time slot~$t$. It is easy to see that the problem formulated in \eqref{eq:sum weights of new MUEs} to \eqref{eq:binary association for new MUEs} is equivalent to finding the optimal matching in the final bipartite we constructed, which is known to be solvable by Kuhn-Munkres algorithm \cite{munkres1957algorithms} in polynomial time.
	
	The transformation from (P2) to the bipartite can be done in polynomial time and vice versa. Therefore, we conclude that (P2) can be accurately solved in polynomial time.
\end{proof}

The proof procedure of \propref{prop:UA} has given a way to solve (P2), so the remaining question is how to define the connection quality $w^t_{b,u}$ so that it can better evaluate the BS's ability to serve the user in environments with wireless channel uncertainty, user mobility, and asymmetric data rate demands. We therefore define $w^t_{b,u}$ to be the smaller square root of the uplink and downlink pseudo-supply-demand ratios, which is formulated as:
\begin{equation}\label{eq:connection quality}
	w_{b,u}^t = \min \left\{ \sqrt{\frac{\tilde{r}^t_{u,b}}{R^{\text{UL}}_u}} \;, \quad \sqrt{\frac{\tilde{r}^t_{b,u}}{R^{\text{DL}}_u}} \right\},
\end{equation}
where $\tilde{r}^t_{u,b}$ and $\tilde{r}^t_{b,u}$ are pseudo data rates of uplink and downlink between BS~$b$ and MUE~$u$, respectively. We use the word "pseudo" here because $\tilde{r}^t_{u,b}$ and $\tilde{r}^t_{b,u}$ refer to the median of the estimated user data rates (rather than actual rates) averaged across all channel states, both at the current location and at the next location inferred from the velocity. The root-seeking operation is primarily intended to curb severe oversupply and schedule wireless resources more reasonably and fairly among MUEs \cite{ye2013user, liu2016user}. While the $\min$ operation is taken because the smaller one of the uplink and downlink supply-demand ratios is the real determinant of the user's communication experience.
\begin{equation}\label{eq:pseudo UL rate}
	\begin{aligned}
		\tilde{r}^t_{u,b} = P^{\text{LOS}}_{t, b, u} \frac{W_b}{2} \left[
		\log_2 \left(1 \! + \! \frac{P_u G_b G_u \left(L^{\text{LOS}}_{t, b, u}\right)^{-1}}{\tilde{I}_{u,b}^{t} + W_b N_0}\right) \right. \! + \! \\
		\left. \log_2 \left(1 + \frac{P_u G_b G_u \left(L^{\text{LOS}}_{t+1, b, u}\right)^{-1}}{\tilde{I}_{u,b}^{t} + W_b N_0}\right) \right] \left(1 - \frac{T_{b,u}}{T_\text{s}}\right) + \\
		P^{\text{NLOS}}_{t, b, u} \frac{W_b}{2} \left[
		\log_2 \left(1 + \frac{P_u G_b G_u \left(L^{\text{NLOS}}_{t, b, u}\right)^{-1}}{\tilde{I}_{u,b}^{t} + W_b N_0}\right) \right. + \\
		\left. \log_2 \left(1 + \frac{P_u G_b G_u \left(L^{\text{NLOS}}_{t+1, b, u}\right)^{-1}}{\tilde{I}_{u,b}^{t} + W_b N_0}\right) \right] \left(1 - \frac{T_{b,u}}{T_\text{s}}\right),
	\end{aligned}
\end{equation}
\begin{equation}\label{eq:pseudo DL rate}
	\begin{aligned}
		\tilde{r}^t_{b,u} = P^{\text{LOS}}_{t, b, u} \frac{W_b}{2} \left[
		\log_2 \left(1 \! + \! \frac{P_b G_b G_u \left(L^{\text{LOS}}_{t, b, u}\right)^{-1}}{\tilde{I}_{b,u}^{t} + W_b N_0}\right) \right. \! + \! \\
		\left. \log_2 \left(1 + \frac{P_b G_b G_u \left(L^{\text{LOS}}_{t+1, b, u}\right)^{-1}}{\tilde{I}_{b,u}^{t} + W_b N_0}\right) \right] \left(1 - \frac{T_{b,u}}{T_\text{s}}\right) + \\
		P^{\text{NLOS}}_{t, b, u} \frac{W_b}{2} \left[
		\log_2 \left(1 + \frac{P_b G_b G_u \left(L^{\text{NLOS}}_{t, b, u}\right)^{-1}}{\tilde{I}_{b,u}^{t} + W_b N_0}\right) \right. + \\
		\left. \log_2 \left(1 + \frac{P_b G_b G_u \left(L^{\text{NLOS}}_{t+1, b, u}\right)^{-1}}{\tilde{I}_{b,u}^{t} + W_b N_0}\right) \right] \left(1 - \frac{T_{b,u}}{T_\text{s}}\right).
	\end{aligned}
\end{equation}

Using pseudo data rate, rather than RSRP or RSSI, can alleviate the association imbalance caused by the disparity in transmission power between MBS and PBS. In addition, incorporating the information at the next location facilitates matching the user with the BS whose location better matches the direction of user movement, when a series of similar BSs are capable of handling. This design helps establish more persistent communication connections for users. The formulas of $\tilde{r}^t_{u,b}$ and $\tilde{r}^t_{b,u}$ are shown in \eqref{eq:pseudo UL rate} and \eqref{eq:pseudo DL rate}, where $\tilde{I}_{u,b}^{t}$ and $\tilde{I}_{b,u}^{t}$ are estimated interference experienced by uplink and downlink transmissions, respectively.

Depending on the value of the interference term in \eqref{eq:pseudo UL rate} and \eqref{eq:pseudo DL rate}, we adopt two schemes to get the pseudo data rate:

\begin{itemize}
	\item \textbf{SNR-based.} In this scheme, the interference term is forced to zero. In other words, we infer the connection quality based on the estimated signal-to-noise ratio.
	
	It should be emphasized that the adoption of SNR does not imply that our algorithm completely ignores interference similar to some previous work, but rather because the interference can be effectively reduced or even avoided through careful scheduling under the assistance of the OFDMA technique and the mmWave directivity. We will elaborate on these contents in \secref{sec:subchannel allocation}.
	
	\item \textbf{SINR-based.} Potential interference is defined as the sum of the average interference from all potential interferers. The interferers of a link, denoted as $\mathcal{O}^t_{u,b}$ for uplink and $\mathcal{O}^t_{b,u}$ for downlink, are BSs and MUEs that work on the same subchannel as the link and locate within the main beam range of the receiver. Since the uplink and downlink switching points of MBSs are synchronized, the interferers of the sub-6 GHz uplink are only MUEs and those of the sub-6 GHz downlink are only BSs.
	
	To unify the expression of the potential interference, we set the main beam width of the omnidirectional antenna to 360$^\circ$. Thus, the potential interference on uplink and downlink are formulated in \eqref{eq:potential UL interference suffered} and \eqref{eq:potential DL interference suffered}. Part of the coefficient is the probability that the interferer uses the same subchannel as the interfered entity and that the interfered one is also in the range of the interferer's main beam.
\end{itemize}
\begin{equation}\label{eq:potential UL interference suffered}
	\begin{aligned}
		\tilde{I}_{u,b}^{t} = \sum_{u^\prime \in \mathcal{O}^t_{u,b} \cap \mathcal{U}}{\frac{\theta_{u^\prime} P_{u^\prime} G_{u^\prime} G_b}{360 \cdot \left|\mathcal{C}_b\right|} \left(\frac{P^{\text{LOS}}_{t, u^\prime, b}}{L^{\text{LOS}}_{t, u^\prime, b}} + \frac{P^{\text{NLOS}}_{t, u^\prime, b}}{L^{\text{NLOS}}_{t, u^\prime, b}} \right)} +  \\
		\sum_{b^\prime \in \mathcal{O}^t_{u,b} \cap \mathcal{B}}{\frac{\theta_{b^\prime} P_{b^\prime} G_{b^\prime} G_b}{360 \cdot \left|\mathcal{C}_b\right|} \left(\frac{P^{\text{LOS}}_{t, b^\prime, b}}{L^{\text{LOS}}_{t, b^\prime, b}} + \frac{P^{\text{NLOS}}_{t, b^\prime, b}}{L^{\text{NLOS}}_{t, b^\prime, b}} \right)}, \;\;\;\;
	\end{aligned}
\end{equation}
\begin{equation}\label{eq:potential DL interference suffered}
	\begin{aligned}
		\tilde{I}_{b,u}^{t} = \sum_{u^\prime \in \mathcal{O}^t_{b,u} \cap \mathcal{U}}{\frac{\theta_{u^\prime} P_{u^\prime} G_{u^\prime} G_u}{360 \cdot \left|\mathcal{C}_b\right|} \left(\frac{P^{\text{LOS}}_{t, u^\prime, u}}{L^{\text{LOS}}_{t, u^\prime, u}} + \frac{P^{\text{NLOS}}_{t, u^\prime, u}}{L^{\text{NLOS}}_{t, u^\prime, u}} \right)} \! + \!  \\
		\sum_{b^\prime \in \mathcal{O}^t_{b,u} \cap \mathcal{B}}{\frac{\theta_{b^\prime} P_{b^\prime} G_{b^\prime} G_u}{360 \cdot \left|\mathcal{C}_b\right|} \left(\frac{P^{\text{LOS}}_{t, b^\prime, u}}{L^{\text{LOS}}_{t, b^\prime, u}} + \frac{P^{\text{NLOS}}_{t, b^\prime, u}}{L^{\text{NLOS}}_{t, b^\prime, u}} \right)}. \;\;\;\;
	\end{aligned}
\end{equation}

The SNR-based pseudo data rate provides an optimistic estimate of the actual connection quality and must be integrated with the subsequent subchannel allocation scheme to avoid excessively optimistic estimates. While the SINR-based pseudo data rate hopes to portray the actual connection quality more accurately and is a more conservative approach.

We name the algorithm proposed here to solve the user association subproblem the Optimal Matching based User Association (OMUA). The complete procedure of OMUA is summarized in \algref{alg:OMUA}.

\begin{algorithm}[!tb]
	\caption{OMUA} \label{alg:OMUA}
	\SetKwInOut{Input}{Input}
	\SetKwInOut{Output}{Output}
	\Input{1) Set $\mathcal{B}$ of BSs;\newline
		2) Set $\mathcal{U}^t$ of MUEs that ask for association;\newline
		3) Position information, velocity information and data rate demands of each MUE in $\mathcal{U}$.}
	\Output{The association result $x^t_{b,u}$ for each MUE in $\mathcal{U}$.}
	
	Maintain the association status of MUEs in $\mathcal{U} \backslash \mathcal{U}^t$.
	
	Construct a bipartite, where each MUE in $\mathcal{U}^t$ maps to a vertex in the left vertex set, each BS in $\mathcal{B}$ maps to several vertices equal to the number of its subchannels minus the number of its ongoing MUEs in the right vertex set.
	
	Establish edges between any two vertices in different parts of the bipartite, and assign edge weights according to \eqref{eq:connection quality}. \label{line:measure edge weight}
	
	If this bipartite is not balanced, auxiliary vertices and edges are added to make it balanced, and the weights of added edges are set to 0.
	
	Invoke the Kuhn-Munkres algorithm to solve the optimal matching of the balanced bipartite. \label{line:KM}
	
	Set the BS that each MUE in $\mathcal{U}^t$ will be associated with based on the matching results.
	
	\textbf{return} the association decision $x^t_{b,u}$ for each MUE in $\mathcal{U}$.
\end{algorithm}

\subsection{Switching Point Selection}\label{sec:switching point selection}

The purpose of selecting a TDD switching point for a BS is to schedule communication resources of the BS more adequately in the time dimension, which can play a more significant role than a fixed scheme when the users of the BS have similar uplink and downlink communication requirements. The ideal switching point of a BS should make the uplink and downlink pseudo-supply-demand ratios of all its users most approximate. However, such a switching point cannot be efficiently solved most of the time.

Fortunately, for a certain MUE~$u$ associated with BS~$b$, the optimal switching point $N^\text{TDD}_{t,b,u}$ is the point where the uplink and downlink pseudo-supply-demand ratios of MUE~$u$ are equal, and it can be easily solved using \eqref{eq:individual optimal switching point}, or \eqref{eq:individual optimal switching point transformed}.
\begin{equation}\label{eq:individual optimal switching point}
	\frac{N^\text{TDD}_{t,b,u}}{N_{\text{s}}} \frac{\tilde{r}^t_{u, b}}{R^{\text{UL}}_u} = \left(1 - \frac{N^\text{TDD}_{t,b,u}}{N_{\text{s}}} \right) \frac{\tilde{r}^t_{b, u}}{R^{\text{DL}}_u},
\end{equation}
\begin{equation}\label{eq:individual optimal switching point transformed}
	N^\text{TDD}_{t,b,u} = \frac{R^{\text{UL}}_u \tilde{r}^t_{b, u}}{R^{\text{UL}}_u \tilde{r}^t_{b, u} + R^{\text{DL}}_u \tilde{r}^t_{u, b}} N_{\text{s}}.
\end{equation}

Taking into account all the MUEs associated with the same BSs, we choose the average of their optimal switching points as the switching point for that BSs. It should be noted that all MBSs need to synchronize their switching points, so their switching points are set to the average of the optimal switching points of all MUEs using MBSs. Therefore, the switching point of BS~$b$ at time slot~$t$ in the form of fractions can be formulated as:
\begin{equation}\label{eq:selected switching point}
	N^\text{TDD}_{t,b} = \left\{\begin{aligned}
		& \frac{\sum_{u \in \cup_{b^\prime \in \mathcal{B}_\text{m}} \mathcal{U}^t_{b^\prime}} N^\text{TDD}_{t,b^\prime,u}}{\left| \cup_{b^\prime \in \mathcal{B}_\text{m}} \mathcal{U}^t_{b^\prime} \right|}, \quad & \text{if } b \in \mathcal{B}_\text{m}
		\\
		& \quad\quad \frac{\sum_{u \in \mathcal{U}^t_b} N^\text{TDD}_{t,b,u}}{\left| \mathcal{U}^t_b \right|}, \quad & \text{if } b \in \mathcal{B}_\text{p}
	\end{aligned} \right.,
\end{equation}
where $\mathcal{U}^t_b$ denotes the set of MUEs that are served by BS~$b$ after the user association process at time slot~$t$. Recall that $N^\text{TDD}_{t,b}$ needs to satisfy constraint \eqref{eq:optional switching point}, so the result of \eqref{eq:selected switching point} needs to be rounded up and then clipped by the upper and lower bounds of $\mathcal{N}^\text{TDD}$ to become the final switching point of a BS.

\subsection{Subchannel Allocation}\label{sec:subchannel allocation}

After completing the user association and BS switching point selection, we manage to determine the specific sub-channels to be used for each connection. This is a necessary operation for establishing a wireless communication link and also an effective means of suppressing co-channel interference between MUEs.

The purpose of this subproblem is to set up communication subchannels for all MUEs that are associated with a BS. It aims to minimize the co-channel interference across the system, ensuring that existing subchannels utilized by ongoing MUEs remain unchanged, and that MUEs under the same BS operate on distinct subchannels. We assert and prove that the model of this problem can be formulated as:
\begin{align}
	(P3) \;\; \underset{\boldsymbol{c}^t}{\text{minimize}} \; \frac{1}{2} \sum_{u, u^\prime \in \mathcal{U}}{\max \left\{0, 1 -\left| c^t_u - c^t_{u^\prime} \right| \right\} i^t_{u,u^\prime}} \\
	\text{s.t.} \quad c^t_u \in \mathcal{C}^t_u, \quad \forall u \in \mathcal{U},
\end{align}
where $c^t_u$ denotes the subchannel assigned to MUE~$u$ at time slot~$t$ and $\mathcal{C}^t_u$ denotes the set of optional subchannels of MUE~$u$. $i^t_{u,u^\prime}$ indicates the interference power between the connections of MUE~$u$ and MUE~$u^\prime$.

If MUE~$u$ is not matched with the BS, $\mathcal{C}^t_u$ is an empty set. Otherwise, if MUE~$u$ is an ongoing user, $\mathcal{C}^t_u$ contains only the subchannel it used in the last time slot. Otherwise, $\mathcal{C}^t_u$ is equivalent to the set of subchannels of the BS to which MUE~$u$ is connected. The self-interference power of a connection is treated as zero, and that between all connections belonging to the same BS or belonging to ongoing MUEs using different subchannels are treated as infinite. Meanwhile, the interference power between the rest connections refers to the largest one of the average interference that one MUE may suffer from the other MUE under the assumption that both of them use the same subchannel, which is given as follows:
\begin{equation}\label{eq:interference between connections}
	\resizebox{0.89\linewidth}{!}{$
	\begin{aligned}
		i_{u, u^\prime}^t = N^{-1}_\text{s}\max \left\{ \left( N_\text{s} - \max \left\{ N^\text{TDD}_{t,b_u}, N^\text{TDD}_{t,b_{u^\prime}} \right\} \right) \overline{i}^t_{b_u, u^\prime}, \right.
		\\
		\left( N_\text{s} - \max \left\{ N^\text{TDD}_{t,b_u}, N^\text{TDD}_{t,b_{u^\prime}} \right\}\right) \overline{i}^t_{b_{u^\prime}, u},
		\\
		\min \left\{ N^\text{TDD}_{t,b_u}, N^\text{TDD}_{t,b_{u^\prime}} \right\} \overline{i}^t_{u, b_{u^\prime}},
		\min \left\{ N^\text{TDD}_{t,b_u}, N^\text{TDD}_{t,b_{u^\prime}} \right\} \overline{i}^t_{u^\prime, b_u},
		\\
		\max \! \left\{ \! 0, \! N^{\text{TDD}}_{t,b_{u^\prime}} \! - \! N^{\text{TDD}}_{t,b_u} \! \right\} \! \overline{i}^t_{b_u, b_{u^\prime}},
		\max \! \left\{ \! 0, \! N^{\text{TDD}}_{t,b_u} \! - \! N^{\text{TDD}}_{t,b_{u^\prime}} \! \right\} \! \overline{i}^t_{b_{u^\prime} \! , b_u},
		\\
		\max \! \left\{ \! 0, \! N^{\text{TDD}}_{t,b_u} \! - \! N^{\text{TDD}}_{t,b_{u^\prime}} \! \right\} \! \overline{i}^t_{u, u^\prime},
		\left.
		\max \! \left\{ \! 0, \! N^{\text{TDD}}_{t,b_{u^\prime}} \! - \! N^{\text{TDD}}_{t,b_u} \! \right\} \! \overline{i}^t_{u^\prime \!, u}
		\! \right\},
	\end{aligned}$}
\end{equation}
where $b_u$ denotes the BS that serves MUE~$u$ and $\overline{i}^t_{x, y}$ denotes the average interference caused by object~$x$ to object~$y$. The first four $\overline{i}^t_{x, y}$ in \eqref{eq:interference between connections} correspond to the co-link interference, the remainings are the cross-link interference, which can be obtained from \eqref{eq:colink interferenc u2b} to \eqref{eq:crosslink interferenc u2u}.
\begin{equation}\label{eq:colink interferenc u2b}
	\overline{i}^t_{u, b_{u^\prime}} = P_u g^{c^t_u,t}_{u, b_{u^\prime}} g^{c^t_u,t}_{b_{u^\prime}, u} \left( \frac{ P^{\text{LOS}}_{t, u, b_{u^\prime}} }{L^{\text{LOS}}_{t, u, b_{u^\prime}}} + \frac{ P^{\text{NLOS}}_{t, u, b_{u^\prime}} }{L^{\text{NLOS}}_{t, u, b_{u^\prime}}} \right),
\end{equation}
\begin{equation}\label{eq:colink interferenc b2u}
	\overline{i}^t_{b_u, u^\prime} = P_{b_u} g^{c^t_u,t}_{b_u, u^\prime} g^{c^t_u,t}_{u^\prime, b_u} \left( \frac{ P^{\text{LOS}}_{t, b_u, u^\prime} }{L^{\text{LOS}}_{t, b_u, u^\prime}} + \frac{ P^{\text{NLOS}}_{t, b_u, u^\prime} }{L^{\text{NLOS}}_{t, b_u, u^\prime}} \right),
\end{equation}
\begin{equation}\label{eq:crosslink interferenc b2b}
	\overline{i}^t_{b_u, b_{u^\prime}} = P_{b_u} g^{c^t_u,t}_{b_u, b_{u^\prime}} g^{c^t_u,t}_{b_{u^\prime},b_u} \left( \frac{ P^{\text{LOS}}_{t, b_u, b_{u^\prime}} }{L^{\text{LOS}}_{t, b_u, b_{u^\prime}}} + \frac{ P^{\text{NLOS}}_{t, b_u, b_{u^\prime}} }{L^{\text{NLOS}}_{t, b_u, b_{u^\prime}}} \right),
\end{equation}
\begin{equation}\label{eq:crosslink interferenc u2u}
	\overline{i}^t_{u, u^\prime} = P_u g^{c^t_u,t}_{u, u^\prime} g^{c^t_u,t}_{u^\prime,u} \left( \frac{ P^{\text{LOS}}_{t, u, u^\prime} }{L^{\text{LOS}}_{t, u, u^\prime}} + \frac{ P^{\text{NLOS}}_{t, u, u^\prime} }{L^{\text{NLOS}}_{t, u, u^\prime}} \right).
\end{equation}

\begin{lemma}\label{lem:finite or infinite interference}
	The solutions of (P3) that assign different subchannels for MUEs of the same BS and for those ongoing MUEs using different subchannels previously always yield finite objective values. Other solutions produce infinite objective values.
\end{lemma}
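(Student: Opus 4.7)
The plan is to exploit the integer nature of the subchannel indices to collapse the piecewise-linear weight in the objective of (P3) to a simple same-subchannel indicator, and then argue finiteness case by case. Since each $c^t_u$ is an integer subchannel index, $|c^t_u - c^t_{u^\prime}|$ is a nonnegative integer, so $\max\{0,\, 1 - |c^t_u - c^t_{u^\prime}|\}$ equals $1$ when $c^t_u = c^t_{u^\prime}$ and $0$ otherwise. Consequently, the objective reduces to $\tfrac{1}{2}\sum_{\{u,u^\prime\}:\, c^t_u = c^t_{u^\prime}} i^t_{u,u^\prime}$, and finiteness of the total is controlled entirely by which co-channel pairs carry the $+\infty$ value built into $i^t_{u,u^\prime}$ by definition.

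For the forward direction I would fix an arbitrary solution satisfying both stated conditions. By hypothesis, no two MUEs attached to a common BS share a subchannel, and no two ongoing MUEs whose previous subchannels differed share a subchannel now. These are precisely the two configurations in which $i^t_{u,u^\prime}$ was declared infinite; every remaining co-channel pair $(u,u^\prime)$ contributes the finite quantity obtained from \eqref{eq:interference between connections} together with \eqref{eq:colink interferenc u2b}--\eqref{eq:crosslink interferenc u2u}, each ingredient of which is a finite power-over-path-loss expression. Since $\mathcal{U}$ is finite, the objective reduces to a finite sum of finite nonnegative terms and is therefore finite.

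For the converse I would argue by contrapositive. If the solution places two MUEs of the same BS on a common subchannel, that single pair alone contributes $1 \cdot (+\infty) = +\infty$ to the objective; the same happens if it places two ongoing MUEs with distinct prior subchannels on a common subchannel. Because all remaining terms of the sum are nonnegative, the total objective is $+\infty$.

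The main thing I expect to need to flag, rather than a genuine obstacle, is a consistency remark: condition (ii) is never actually in tension with feasibility, because for an ongoing MUE~$u$ the admissible set $\mathcal{C}^t_u$ is the singleton formed by its previous subchannel, so two ongoing MUEs with distinct prior subchannels are automatically on distinct current subchannels in any feasible $\boldsymbol{c}^t$. Making this observation explicit clarifies that the lemma effectively characterizes feasibility-with-finite-objective via a single combinatorial requirement, namely that MUEs sharing a BS must be separated across subchannels, and justifies why later steps may safely restrict attention to this class of solutions when seeking the minimizer of (P3).
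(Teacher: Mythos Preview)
Your argument is correct and follows the same line as the paper's proof: both hinge on the fact that the coefficient $\max\{0,1-|c^t_u-c^t_{u'}|\}$ acts as a same-subchannel indicator, so the objective retains only the finite interference terms of \eqref{eq:interference between connections} when the stated separation conditions hold, and picks up an $+\infty$ term otherwise. Your version is simply more explicit---spelling out the indicator reduction, the nonnegativity needed for the contrapositive, and the feasibility remark on ongoing MUEs---whereas the paper states the conclusion in two sentences.
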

\begin{proof}
	It is clear that the subchannel allocation scheme satisfying the above conditions will only retain $i^t_{u, u^\prime}$ calculated by \eqref{eq:interference between connections} in the optimization objective, which are all finite values. On the contrary, it will certainly retain infinite terms in the optimization objective.
\end{proof}
\begin{lemma}\label{lem:exist a solution produce finite interference}
	There is always a solution of (P3) that produces a finite objective value.
\end{lemma}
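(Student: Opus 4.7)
The plan is to give a constructive proof: exhibit an explicit subchannel allocation that satisfies the two conditions identified in \lemref{lem:finite or infinite interference}, namely that MUEs of the same BS are assigned distinct subchannels and ongoing MUEs keep their previously assigned subchannels. Once such an allocation is produced, \lemref{lem:finite or infinite interference} immediately certifies that its objective value is finite.

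First I would argue by induction on the time slot~$t$. For the base case~$t=0$ there are no ongoing MUEs, so the only requirement is that MUEs served by the same BS receive distinct subchannels; since the user association subproblem (P2) enforces $\sum_{u \in \mathcal{U}} x^t_{b,u} \leq |\mathcal{C}_b|$ via constraint \eqref{eq:not exceeding subchannels}, one may simply enumerate the MUEs at each BS~$b$ and assign them distinct elements of $\mathcal{C}_b$, which is possible precisely because the MUE count does not exceed $|\mathcal{C}_b|$. For the inductive step, assume that at slot~$t-1$ the chosen allocation already satisfies both conditions, so every ongoing MUE at slot~$t$ owns a subchannel of its serving BS that is not shared with any other MUE at the same BS.

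Next I would construct the slot-$t$ allocation by keeping each ongoing MUE on its previous subchannel (this is in fact mandated by $\mathcal{C}^t_u$ being a singleton for ongoing MUEs) and allocating the newly associated MUEs at BS~$b$ to the remaining idle subchannels $\mathcal{C}^t_b$. The count works out because the number of new MUEs at BS~$b$ is at most $|\mathcal{C}_b|$ minus the number of ongoing MUEs at $b$, which equals $|\mathcal{C}^t_b|$; hence an injection from the new MUEs at $b$ into $\mathcal{C}^t_b$ exists and can be fixed arbitrarily. The resulting allocation clearly satisfies both hypotheses of \lemref{lem:finite or infinite interference}, and the conclusion follows.

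The main obstacle, if any, will be bookkeeping rather than a deep argument: one has to verify that (P2)'s capacity constraint is preserved tightly enough across time slots so that the set $\mathcal{C}^t_b$ of idle subchannels always accommodates the new MUEs that were matched to $b$ at slot~$t$. This is ensured by the user association step, which, as described in \secref{sec:user association} and the proof of \propref{prop:UA}, explicitly reserves one subchannel per ongoing MUE before constructing the bipartite graph, so the matched new MUEs at $b$ are automatically bounded by $|\mathcal{C}^t_b|$. With this observation in place the construction goes through and the lemma holds.
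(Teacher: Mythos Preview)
Your proposal is correct and follows essentially the same route as the paper: both arguments hinge on the observation that the bipartite construction in \algref{alg:OMUA} reserves one subchannel per ongoing MUE and limits the number of newly matched MUEs at BS~$b$ to $|\mathcal{C}^t_b|$, so a feasible allocation meeting the hypotheses of \lemref{lem:finite or infinite interference} always exists. Your induction on~$t$ is more explicit than the paper's single-paragraph argument, but the content is the same.
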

\begin{proof}
	Recall the procedure of constructing the bipartite in \algref{alg:OMUA}, it ensures that the number of MUEs ultimately associated with each BS does not exceed its total number of subchannels. Therefore, there must exist feasible allocation schemes that allow each MUE of the same BS to occupy different subchannels. At the same time, each BS can maintain the original subchannel for its (possibly existing) ongoing MUEs.
	
	These imply that there necessarily exists allocation schemes that satisfy the conditions mentioned in \lemref{lem:finite or infinite interference}. Therefore, these allocation schemes produce finite objective values.
\end{proof}
\begin{proposition}\label{prop:optimal solution produce what we want}
	The optimal solution to (P3) is exactly the allocation scheme pursued by the subchannel allocation problem.
\end{proposition}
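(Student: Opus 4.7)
The plan is to combine Lemma~\ref{lem:finite or infinite interference} and Lemma~\ref{lem:exist a solution produce finite interference} to eliminate all infeasible configurations, and then verify that on the feasible domain the objective of (P3) coincides exactly with the total co-channel interference that the subchannel allocation problem seeks to minimize.

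First, I would argue that the optimal solution to (P3) must satisfy the two structural requirements of the subchannel allocation problem, namely that MUEs served by the same BS are given different subchannels and that any ongoing MUE keeps its previously assigned subchannel. By Lemma~\ref{lem:exist a solution produce finite interference}, there exists at least one feasible $\boldsymbol{c}^t$ whose objective is finite. By Lemma~\ref{lem:finite or infinite interference}, any allocation violating either structural requirement produces an infinite objective. Since the minimum over the feasible set is upper-bounded by the finite value guaranteed by Lemma~\ref{lem:exist a solution produce finite interference}, every optimal solution of (P3) lies in the subset that satisfies the two structural requirements; this subset is exactly the feasible region of the original subchannel allocation problem.

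Next I would examine the objective on this restricted feasible region. For any two MUEs $u$ and $u^\prime$, the coefficient $\max\{0, 1-|c^t_u - c^t_{u^\prime}|\}$ equals $1$ when $c^t_u = c^t_{u^\prime}$ and equals $0$ whenever $|c^t_u - c^t_{u^\prime}| \ge 1$, since subchannel indices are integers. Hence the sum
\begin{equation*}
\tfrac{1}{2}\sum_{u,u^\prime \in \mathcal{U}} \max\{0, 1-|c^t_u - c^t_{u^\prime}|\}\, i^t_{u,u^\prime} \;=\; \tfrac{1}{2}\sum_{\substack{u,u^\prime \in \mathcal{U}\\ c^t_u = c^t_{u^\prime}}} i^t_{u,u^\prime},
\end{equation*}
which is precisely the aggregate co-channel interference power across all pairs of connections sharing the same subchannel, with the factor $1/2$ removing the double counting induced by symmetry of $i^t_{u,u^\prime}$. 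Pairs on distinct subchannels contribute zero, and self-pairs contribute zero by the convention $i^t_{u,u}=0$ introduced above.

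Combining the two steps, minimising the objective of (P3) is equivalent to searching, within allocations that assign distinct subchannels to co-cell MUEs and preserve the subchannels of ongoing MUEs, the one that minimises the total co-channel interference. This is word-for-word the target of the subchannel allocation problem, so the optimal solution of (P3) and the desired allocation scheme coincide. The main obstacle I anticipate is purely bookkeeping: one must be careful to use the integrality of subchannel indices when collapsing the coefficient $\max\{0,1-|c^t_u-c^t_{u^\prime}|\}$ into an indicator of equality, and to argue that the infinite entries introduced for pairs under the same BS and for same-BS pairs containing ongoing MUEs with different past subchannels are truly prohibitive rather than merely penalising; both follow directly from the definition of $i^t_{u,u^\prime}$ and Lemma~\ref{lem:finite or infinite interference}.
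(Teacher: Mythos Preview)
Your approach is essentially the same as the paper's: invoke \lemref{lem:exist a solution produce finite interference} to show the minimum is finite, then \lemref{lem:finite or infinite interference} to conclude the optimum respects the structural constraints, and finally identify the objective with total co-channel interference. Your second step, where you reduce $\max\{0,1-|c^t_u-c^t_{u^\prime}|\}$ to the indicator $\mathbb{1}[c^t_u=c^t_{u^\prime}]$ via integrality, is more explicit than the paper, which simply asserts ``the optimal solution of (P3) achieves the objective consistent with the subchannel allocation problem.''

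One subtlety you gloss over: \lemref{lem:finite or infinite interference} does \emph{not} say that violating ``any ongoing MUE keeps its previously assigned subchannel'' yields an infinite objective. The infinite entries in $i^t_{u,u^\prime}$ are triggered only when two ongoing MUEs that \emph{previously used different} subchannels are now placed on the \emph{same} subchannel; this is strictly weaker than pinning each ongoing MUE to its original index. The paper notices this gap and closes it separately by observing that from any optimal solution one can perform a ``collective subchannel swap'' (a relabelling of channel indices) that restores every ongoing MUE to its original subchannel without changing the objective or violating the conditions of \lemref{lem:finite or infinite interference}. In your write-up you should either include this swap argument, or---more directly---note that the feasibility constraint $c^t_u\in\mathcal{C}^t_u$ already forces $c^t_u$ to equal the previous subchannel for every ongoing MUE, since $\mathcal{C}^t_u$ is a singleton in that case. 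Either route closes the gap; just do not attribute it to \lemref{lem:finite or infinite interference}.
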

\begin{proof}
	According to \lemref{lem:exist a solution produce finite interference}, the optimal solution of (P3) must yield a finite objective value. Combined with \lemref{lem:finite or infinite interference}, this optimal solution has to satisfy the conditions stated in \lemref{lem:finite or infinite interference}, which are also the constraints that need to be satisfied for the subchannel allocation problem.
	
	Notice that there may be more than one optimal solution to (P3), but all of them can ensure that all ongoing MUEs obtain original subchannels by simple collective subchannel swaps. The swap operation does not change the objective value and does not violate the conditions mentioned in \lemref{lem:finite or infinite interference}.
	
	In summary, the optimal solution of (P3) achieves the objective consistent with the subchannel allocation problem and satisfies all its constraints.
\end{proof}

We have shown in \propref{prop:optimal solution produce what we want} that solving (P3) accomplishes the task of subchannel assignment; unfortunately, (P3) is intractable. To prove this, we represent (P3) as (P4) through a simple transformation:
\begin{align}
	(P4) \;\; \underset{\boldsymbol{c}^t}{\text{maximize}} \; \frac{1}{2} \sum_{u, u^\prime \in \mathcal{U}}{\min \left\{1, \left| c^t_u - c^t_{u^\prime} \right| \right\} i^t_{u,u^\prime}} \\
	\text{s.t.} \quad c^t_u \in \mathcal{C}^t_u, \quad \forall u \in \mathcal{U}.
\end{align}

The problem given in (P4) can be regarded as a graph partitioning problem, i.e., dividing all vertices of a given graph $\mathcal{G}$ into a specified number of clusters such that the sum of edge weights between clusters is maximized. We formally define the graph partitioning problem as:
\begin{align}
	(P5) \;\; \underset{\mathcal{G}_1, \dots, \mathcal{G}_{\left| \mathcal{C} \right|}}{\text{maximize}} \quad \frac{1}{2} \sum_{c = 1}^{\left| \mathcal{C} \right|}{ W \left( \mathcal{G}_c, \mathcal{G}_{-c} \right) }
	\\
	\text{s.t.} \quad \mathcal{G}_c \cap \mathcal{G}_{c^\prime} = \emptyset, \;\; \forall c, c^\prime \in \mathcal{C},
	\\
	\mathcal{G}_1 \cup \cdots \cup \mathcal{G}_{\left| \mathcal{C} \right|} = \mathcal{G},
\end{align}
where $\mathcal{G}_c$ is the set of vertices in cluster~$c$, $\mathcal{G}_{-c}$ is the set of vertices that are not in cluster~$c$ and $\left|\mathcal{C}\right|$ is the number of clusters to be divided. $W \left(\mathcal{G}_1, \mathcal{G}_2 \right)$ measures the sum of edge weights between cluster~$\mathcal{G}_1$ and $\mathcal{G}_2$, as follows:
\begin{equation}\label{eq:sum of edge weights between two clusters}
	W \left(\mathcal{G}_1, \mathcal{G}_2 \right) = \sum_{u \in \mathcal{G}_1, v \in \mathcal{G}_2} w_{u,v}.
\end{equation}

Recall that MBSs and PBSs use different frequency bands and their respective MUEs can only use the corresponding subchannels without interfering with each other. Therefore, (P4) can be treated as two separate graph partitioning problems, one graph corresponding to MBS MUEs and the other corresponding to PBS MUEs: the set of subchannels is mapped to the set of clusters, the interference power between MUE connections is mapped to the edge weight, and ongoing MUEs originally using the same subchannel collapse into a single vertex.

\begin{proposition}
	The subchannel allocation problem in (P4) is NP-hard.
\end{proposition}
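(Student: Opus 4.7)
The plan is to prove NP-hardness of (P4) by a polynomial-time reduction from the classical weighted Max-cut problem, which is known to be NP-hard. The key observation driving the reduction is that in the objective of (P4), the factor $\min\{1, |c^t_u - c^t_{u'}|\}$ always evaluates to either $0$ (when two MUEs share a subchannel) or $1$ (when they do not), regardless of how many subchannels are available. Hence (P4) is, structurally, a $k$-way weighted Max-cut over the ``interference graph'' whose edge weights are $i^t_{u,u'}$, and it contains ordinary $2$-way Max-cut as a special case.

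Concretely, given an arbitrary instance of weighted Max-cut specified by a graph $H = (V, E)$ with nonnegative edge weights $\{w_{uv}\}_{\{u,v\}\in E}$, I would construct an instance of (P4) as follows. Introduce one MUE for each vertex in $V$, and take $\mathcal{C}^t_u = \{1, 2\}$ for every MUE $u$ (that is, every MUE may choose between the same two subchannels, with no ongoing constraints). Set the interference powers to $i^t_{u,u'} = w_{uu'}$ whenever $\{u,u'\} \in E$ and $i^t_{u,u'} = 0$ otherwise. This construction is clearly polynomial in the size of $H$.

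With this construction, for any assignment $\boldsymbol{c}^t$ the objective of (P4) reduces to
\begin{equation*}
\frac{1}{2}\sum_{u,u'\in V} \min\{1,|c^t_u - c^t_{u'}|\}\, w_{uu'} \;=\; \sum_{\{u,u'\}\in E,\; c^t_u \neq c^t_{u'}} w_{uu'},
\end{equation*}
which is exactly the weight of the cut induced by the partition $(\{u : c^t_u = 1\}, \{u : c^t_u = 2\})$ of $V$. Consequently, an optimum of (P4) on the constructed instance yields, in polynomial time, an optimum of the given Max-cut instance, and vice versa. Since Max-cut is NP-hard \cite{goemans1994879}, (P4) is NP-hard as well.

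The only subtlety — and the point most worth double-checking when writing the proof formally — is that the reduced instance must be a \emph{legal} instance of (P4), meaning that $\mathcal{C}^t_u = \{1,2\}$ for all $u$ and the $i^t_{u,u'}$ values are admissible. Since (P4) takes $\mathcal{C}^t_u$ and $i^t_{u,u'}$ as input data without further structural constraints (the physical derivation of $i^t_{u,u'}$ via \eqref{eq:interference between connections} only fixes its \emph{value} in the original problem, not the set of possible values that (P4) may receive), arbitrary nonnegative weights can be realized, so this causes no obstacle. No issue arises from the factor $\tfrac{1}{2}$ either, since it only rescales the objective and preserves the argmax.
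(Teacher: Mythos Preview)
Your proof is correct and follows essentially the same approach as the paper: both reduce from weighted Max-cut by restricting to two subchannels (equivalently, two clusters in the paper's graph-partitioning reformulation (P5)). Your presentation is slightly more explicit about the polynomial-time construction of the reduced instance, but the core idea is identical.
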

\begin{proof}
	We draw this conclusion by a reduction from the Max-cut problem, which aims to divide a graph into two parts such that the sum of the weights of the cut edges is maximized.
	
	Any algorithm that can find the optimal solution to (P5) can also solve the Max-cut problem: simply set the number of clusters to 2. In other words, the Max-cut problem is a special case of (P5), i.e., the Max-cut problem can be reduced to (P5).
	
	Since the Max-cut problem is a well-known NP-hard problem \cite{goemans1994879}, the problem shown in (P5), which we have proved at least as difficult to solve as the Max-cut problem, is also an NP-hard problem. Meanwhile, recall that (P5) is another form of (P4) from the perspective of graph partitioning, so (P4) is NP-hard.
\end{proof}

It is well known that the NP-hard problem cannot be solved accurately in polynomial time unless P = NP. To ensure the usability of the algorithm in wireless environments, we sacrifice the optimality of the solution in favor of an approximate one. Based on the graph perspective presented in (P5), we turn our attention to the clustering problem and find that Spectral Clustering \cite{von2007tutorial} solves a problem that is similar to (P5):
\begin{itemize}
	\item Spectral clustering also involves dividing the vertices in the graph into a specified number of clusters.
	
	\item Spectral clustering also deals with edge-weighted graphs, where edge weights measure the similarity of endpoints.
	
	\item Spectral clustering wants to classify vertices with higher similarity into the same cluster, so its objective is to minimize the sum of edge weights between clusters.
\end{itemize}

Spectral clustering appears to pursue the opposite goal of (P5), but in a physical sense, MUE connections that interfere with each other more intensely should be more dissimilar \cite{von2007tutorial}. Thus, the gap between (P5) and the problem handled by spectral clustering is a function that maps the intensity of interference to similarity, we call it the similarity function.

Before elaborating on the similarity function, we distinguish between two classes of MUEs:
\begin{itemize}
	\item \textbf{Conflicting MUEs.} They refer to MUEs that should never use the same subchannel, i.e., MUEs associated with the same BS, as well as ongoing MUEs that originally used different subchannels.
	
	\item \textbf{Interfering MUEs.} They refer to MUEs that generate the finite interference defined in \eqref{eq:interference between connections} when using the same subchannel.
\end{itemize}

We define the similarity between conflicting MUEs to be 0, and that between interfering MUEs to be the smaller ratio of the total interference suffered by one side from all interfering MUEs except and including the other side as follows:
\begin{equation}\label{eq:similarity between MUEs}
	s^t_{u,u^\prime} = \min \left\{ \frac{\sum_{v \neq u^\prime \in \mathcal{U}^\text{I}_{t,u}} i^t_{u,v}}{\sum_{v \in \mathcal{U}^\text{I}_{t,u}} i^t_{u,v}}, \frac{\sum_{v \neq u \in \mathcal{U}^\text{I}_{t,u^\prime}} i^t_{v,u^\prime}}{\sum_{v \in \mathcal{U}^\text{I}_{t,u^\prime}} i^t_{v,u^\prime}} \right\},
\end{equation}
where $\mathcal{U}^\text{I}_{t,u}$ denotes the set of MUEs that interfere with MUE~$u$ at time slot~$t$. Thus, the similarity function $\mathbb{S}\left(\cdot, \cdot\right)$ can be formulated in \eqref{eq:similarity function}. We must emphasize that the mapping here is not lossless, i.e., the problems before and after the mapping are not exactly equivalent, which is an unavoidable compromise due to the intractability of (P5).
\begin{equation}\label{eq:similarity function}
	\mathbb{S} \left(u, u^\prime \right) = \left\{\begin{aligned}
		0, & \quad \text{for conflicting MUEs}
		\\
		s^t_{u, u^\prime}, & \quad \text{for interfering MUEs}
	\end{aligned}\right..
\end{equation}

The definition of the similarity function ensures that the similarity between all MUEs is symmetric and lies within the interval $[0,1]$, which means that the similarity matrix $S^t$ consisting of $\mathbb{S} \left(u, u^\prime \right)$ is a symmetric matrix with non-negative elements, thus satisfying the requirements of spectral clustering. With the similarity matrix $S^t$, we define the Laplacian:
\begin{equation}\label{eq:Laplacian}
	\mathbb{L}^t = D^t - S^t,
\end{equation}
where $D^t$ is the diagonal matrix of the row sum of $S^t$. By feeding $\mathbb{L}^t$ to spectral clustering, we can obtain clustering results for all MUEs.

We have proved in \propref{prop:optimal solution produce what we want} that the optimal solution of (P3) is the best scheme for subchannel allocation. This conclusion readily extends to (P4) and (P5), given their equivalence at the optimal solution. However, the similarity function $\mathbb{S}\left(\cdot\right)$ introduces some gap between (P5) and the clustering problem, and spectral clustering is a method that sacrifices optimality for efficiency, which leads to the fact that its solution may not satisfy the constraints of subchannel allocation. Spectral clustering does not guarantee to classify conflicting MUEs into different clusters.

To solve the above problem, our basic idea is to adjust the clustering results by replacing the conflicting MUEs classified into the same cluster with new clusters until there are no conflicting MUEs in any of the clusters. Fortunately, according to \lemref{lem:exist a solution produce finite interference}, such a clustering scheme must exist and the adjustment operation can be done in polynomial time.

Specifically, we start checking from larger clusters, if there are conflicting ongoing MUEs, we find their respective target clusters, which are defined as the rest clusters where there are no other MUEs in conflict with them and the sum of the cut-edge weights, i.e., the interference power $i^t_{u, u^\prime}$, is the smallest. Then we compute the increment to the cut-edge weights brought by their switching to their respective target clusters, the conflicting ongoing MUE with the largest increment is selected to switch to its target cluster. This process is repeated until all clusters are free of conflicting ongoing MUEs. After that, we continue to check for conflicting but non-ongoing MUEs in the clusters. If such MUEs exist, we perform the same operation as above on them until all clusters are free of any conflicting MUEs.

Based on the refined clustering results, we correspond each cluster to a specific subchannel. Specifically, the clusters containing ongoing MUEs correspond to the subchannels originally used by the ongoing MUEs, while the other clusters simply correspond to any different remaining subchannels. MUEs in the cluster eventually obtain the subchannel corresponding to the cluster.

We name the algorithm devised to solve the subchannel allocation subproblem the Spectral Clustering Subchannel Allocation (SCSA) and summarize the procedure in \algref{alg:SCSA}.

\begin{algorithm}[!tb]
	\caption{SCSA} \label{alg:SCSA}
	\SetKwInOut{Input}{Input}
	\SetKwInOut{Output}{Output}
	\Input{1) Association results $x^t_{b,u}$ of MUEs;\newline
		2) TDD switching points $N^\text{TDD}_{t,b}$ of BSs.}
	\Output{Subchannel allocation results $x^{c,t}_{b,u}$ for MUEs.}
	Divide macro and pico MUEs into groups $\mathcal{U}_\text{m}$ and $\mathcal{U}_\text{p}$.
	
	\For{each group in  $\left\{\mathcal{U}_\text{m}, \mathcal{U}_\text{p}\right\}$}{
		Set the interference power $i^t_{u, u^\prime}$ between conflicting MUEs to $\infty$ and others to \eqref{eq:interference between connections}.
		
		Construct interference graph $\mathcal{G}$.
	
		Collapse the vertices in $\mathcal{G}$ corresponding to ongoing MUEs originally using the same subchannel into a single vertex. If multiple edges appear, merge edge weights.
		
		Convert the edge weights in $\mathcal{G}$ to similarity as \eqref{eq:similarity function}.
		
		\tcp{Spectral Clustering}
		Construct the Laplacian matrix $\mathbb{L}^t$.
		
		Set the number of clusters $\left|\mathcal{C}\right|$ to be the smaller one of the number of subchannels and the number of vertices $\left|\mathcal{G}\right|$.
		
		Compute the eigenvectors, denoted as $\left\{e_1, \dots, e_{\left|\mathcal{C}\right|}\right\}$, corresponding to the first $\left|\mathcal{C}\right|$ smallest eigenvalues of $\mathbb{L}^t$.
		
		Cluster the points $\left\{y_i | i = 1, \dots, \left|\mathcal{G}\right| \right\}$ in $\mathbb{R}^{\left|\mathcal{C}\right|}$, where $y_i$ is the $i$-th row of matrix $\left[e_1, \dots, e_{\left|\mathcal{C}\right|}\right]$, with the K-means algorithm into $\left|\mathcal{C}\right|$ clusters.
		
		\tcp{Subchannel Correction}
		\For{each cluster having conflicting ongoing MUEs}{
			For such MUEs, find their respective target clusters, where there are no other MUEs in conflict with them and the sum of the cut-edge weights $i^t_{u, u^\prime}$ is the smallest.
			
			Calculate for each MUE the increment of the cut-edge weights that switching to the target cluster brought.
			
			Move the MUE that yields highest increment to its target cluster.
		}
		
		\For{each cluster having other conflicting MUEs}{
			For such MUEs, find their target clusters and calculate the increment.
			
			Move the MUE that yields highest increment to its target cluster.
		}
		
		The clusters containing ongoing MUEs are assigned the original subchannels, the other clusters are assigned any different remaining subchannels.
	}

	\textbf{return} the subchannel $x^{c,t}_{b,u}$ assigned to each MUE.
\end{algorithm}

\subsection{Complete Algorithm}

We elaborate on the subproblems of (P1) and the schemes designed to solve them in \secref{sec:user association} to \secref{sec:subchannel allocation}. Integrating all the schemes yields the complete algorithm we propose for solving (P1), which is called OMSC.

OMSC first selects more suitable BSs for all MUEs based on optimal matching. Given the association results of the MUEs, the BS can then estimate the TDD switching point that balances the supply-demand ratio of uplink and downlink data rates. Moreover, the outcomes of the two successive schedulings clarify and fix the spatial coverage relationship between the signals of the MUEs, which reduces the uncertainty of interference management. On this basis, we carry out subchannel allocation optimization based on spectral clustering and try to suppress interference as much as possible using frequency division multiplexing. The detailed procedure of OMSC is shown in \algref{alg:OMSC}.

\begin{algorithm}[!tb]
	\caption{OMSC} \label{alg:OMSC}
	\SetKwInOut{Input}{Input}
	\SetKwInOut{Output}{Output}
	\Input{1) Set $\mathcal{B}$ of BSs;\newline
		2) Set $\mathcal{U}^t$ of MUEs that ask for association;\newline
		3) Position information, velocity information and data rate demands of each MUE in $\mathcal{U}$.}
	\Output{1) Decisions $x^{c,t}_{b,u}$ for each MUE;\newline
		2) TDD switching points $N^\text{TDD}_{t,b}$ for each BS.}
	
	\tcp{The association decision}
	Invoke \textit{OMUA} in \algref{alg:OMUA} given $\mathcal{B}$, $\mathcal{U}^t$, and information related to MUEs as inputs to get the association decisions $x^t_{b,u}$.
	
	\tcp{The TDD switching point selection}
	\For{each BS in $\mathcal{B}$}{
		Estimate the ideal switching point for each associated MUE according to \eqref{eq:individual optimal switching point transformed}.
		
		Compute the average value of the ideal switching point of these MUEs according to \eqref{eq:selected switching point}.
		
		Round the average value, clip it with the upper and lower bounds of the set $\mathcal{N}^\text{TDD}$, and adopt the result to be the actual switching point of the BS.
	}
	
	\tcp{The subchannel allocation}
	Invoke \textit{SCSA} in \algref{alg:SCSA} given $x^t_{b,u}$ and $N^\text{TDD}_{t,b}$ as inputs to get the subchannel assignments $x^{c,t}_{b,u}$ for each MUE.
	
	\textbf{return} decisions $x^{c,t}_{b,u}$ for each MUE and switching points $N^\text{TDD}_{t,b}$ for each BS.
\end{algorithm}

\subsection{Complexity Analysis} \label{sec:complexity analysis}

In this subsection, we will analyze each stage of OMSC to derive the complexity of the whole algorithm.

The main time overhead of OMUA stems from the computation of the edge weights of the bipartite and the Kuhn-Munkres algorithm to find the optimal matching, which corresponds to \alglinref{line:measure edge weight} and \alglinref{line:KM} in \algref{alg:OMUA}, respectively. The SNR-based scheme has the complexity of $O\left(\left| U \right| \left( \left|\mathcal{C}_\text{m}\right| \left|\mathcal{B}_\text{m} \right| + \left|\mathcal{C}_\text{p}\right| \left|\mathcal{B}_\text{p} \right|\right)\right)$, whereas the SINR-based scheme needs to traverse all entities to identify potential interferers for each connection, so it has the complexity of $O\left(\left| U \right| \left( \left|\mathcal{C}_\text{m}\right| \left|\mathcal{B}_\text{m} \right| + \left|\mathcal{C}_\text{p}\right| \left|\mathcal{B}_\text{p} \right| \right) \left( \left| U \right| + \left|\mathcal{B}\right| \right) \right)$. The Kuhn-Munkres algorithm can find an optimal matching with the complexity of $O\left( \max \left\{\left| U \right|, \left|\mathcal{C}_\text{m}\right| \left|\mathcal{B}_\text{m} \right| + \left|\mathcal{C}_\text{p}\right| \left|\mathcal{B}_\text{p} \right|\right\}^3 \right)$. Thus, the time complexity of OMUA is $O\left( \max \left\{\left| U \right|, \left|\mathcal{C}_\text{m}\right| \left|\mathcal{B}_\text{m} \right| + \left|\mathcal{C}_\text{p}\right| \left|\mathcal{B}_\text{p} \right|\right\}^3 \right)$.

Determining the switching point for all BSs requires traversing all MUEs and BSs once, so the complexity is $O\left( \left| \mathcal{U} \right| + \left| \mathcal{B} \right| \right)$.

The analysis of SCSA is relatively cumbersome. Measuring the interference power between all MUE connections has the complexity of $O\left( \left| \mathcal{U} \right|^2 \right)$. Collapsing the interference graph also requires checking if any two MUEs need to be collapsed, which has the complexity of $O\left( \left| \mathcal{U} \right|^2 \right)$. Mapping the collapsed interference matrix to a similarity matrix requires processing at most $\left| \mathcal{U} \right|^2$ elements, it also has the complexity of $O\left( \left| \mathcal{U} \right|^2 \right)$. Computing the Laplacian has the complexity of $O\left( \left| \mathcal{U} \right|^2 \right)$. Finding the eigenvalues of the Laplacian has the complexity of $O\left( \left| \mathcal{U} \right|^3 \right)$, and sorting them in ascending order of corresponding eigenvalues has the complexity of $O\left( \left| \mathcal{U} \right| \log_2 \left| \mathcal{U} \right| \right)$. K-means has the complexity of $O\left( \left| \mathcal{U} \right| \left| \mathcal{C} \right| \right)$ per iteration and only a few iterations to converge in practice. Searching all clusters for conflicting MUEs has the complexity of $O\left( \left| \mathcal{U} \right|^2 \right)$, finding target clusters and increments of cut-edge weights for all such MUEs has the complexity of $O\left( \left| \mathcal{U} \right|^2 \right)$, and sorting them in descending order of increment has the complexity of $O\left( \left| \mathcal{U} \right| \log_2 \left| \mathcal{U} \right| \right)$. Assigning subchannel for each cluster after correction has the complexity of $O\left( \left| \mathcal{U} \right| + \left| \mathcal{C} \right|^2 \right)$. Considering that the number of MUEs in the system is often much higher than that of subchannels at the BS, the time complexity of SCSA is bounded by $O\left( \left| \mathcal{U} \right|^3 \right)$.

Therefore, the time complexity of OMSC consisting of the serial execution of these subalgorithms is $O\left( \max \left\{\left| U \right|, \left|\mathcal{C}_\text{m}\right| \left|\mathcal{B}_\text{m} \right| + \left|\mathcal{C}_\text{p}\right| \left|\mathcal{B}_\text{p} \right|\right\}^3 \right)$.

\section{Numerical Simulations} \label{sec:numerical simulations}

In this section, we begin to simulate the performance of the algorithm proposed in this paper and previous work in solving (P1). First, we introduce the comparison methods adopted, and then describe the performance metrics on which the simulation focuses. Afterwards, we specify the simulation environment and parameter settings and finally discuss and analyze the results obtained.

\begin{table}[!tb]
	\begin{center}
		\caption{Parameter settings}
		\label{tab:parameter settings}
		\begin{tabular}{ p{5em} p{19em} }
			\hline
			\textbf{Symbol} & \textbf{Value} \\
			\hline
			$N_\text{s}$, $\left|\mathcal{U}\right|$ & 8, \{150, 200, 250\} \\
			$\left|\mathcal{B}_\text{m}\right|, \left|\mathcal{B}_\text{p}\right|$ & 2, 60 \\
			$\left|\mathcal{C}_\text{m}\right|, \left|\mathcal{C}_\text{p}\right|$ & 18 \cite{alizadeh2022reinforcement}, 3  \\
			$ f_{\text{LTE}}, f_{\text{mmW}}$ & 1.9 GHz, 28 GHz \cite{qiao2023joint} \\
			$W_\text{m}, W_\text{p}$ & 1.8 MHz, 14.4 MHz \cite{qiao2023joint} \\
			$G_b, \theta_b$ & \{ (24.5 dBi, 10$^\circ$) \cite{rangan2014millimeter}, (15 dBi, 30$^\circ$) \cite{rappaport2013millimeter} \} \\
			$\vartheta_b, \vartheta_u$ & 90$^\circ$, 90$^\circ$ \cite{shokri2015beam} \\
			$n_\text{L}^{\text{LTE}}, n_\text{N}^{\text{LTE}}$ & 2, 3.37 \cite{blackard1992path, nwelih2022method} \\
			$n_\text{L}^{\text{mmW}}, n_\text{N}^{\text{mmW}}$ & 2.55, 5.76 \cite{rappaport2013millimeter} \\
			$P_\text{m}, P_\text{p}, P_u$ & 43 dBm, 33 dBm \cite{lee2022message}, 30 dBm \cite{dai2022joint} \\
			$N_\text{0}$ & $-$174 dBm/Hz \cite{alizadeh2022reinforcement} \\
			$\lambda_\text{o}$, $\mathbb{E}_\text{L}$ & $4.4 \times 10^{-4} / \text{m}^2$ \cite{bai2014analysis}, 55 m \cite{bai2014analysis} \\
			$T_{\text{p}}, T_\text{s}$ & 20 \textmu s, 65535 \textmu s \cite{shokri2015beam} \\
			$\beta_\text{f}, \beta_\text{r}$ & \{0.5, 1.0, 1.5\}, 0.5 \cite{rhee2011levy} \\
			$R^{\text{UL}}_u, R^{\text{DL}}_u$ & \{(15 Mbps, 1.0 Mbps), (15 Mbps, 15 Mbps), \hspace*{2pt} (0.1 Mbps, 15 Mbps)\} \\
			\hline
		\end{tabular}
	\end{center}
\end{table}

\subsection{Comparison Algorithms}

We implement two schemes of the OMSC algorithm, the default adopts the SNR-based pseudo supply-demand ratio, and we denote the one based on SINR as OMSC SINR. In addition, we consider two baselines:
\begin{itemize}
	\item \textbf{LCUAS} \cite{zhang2020optimal}. It prioritizes the MUEs with the least number of available BSs around them to establish connections, and the association rule is to select the BS with the least number of associated MUEs among the available BSs, i.e., the BS with the lightest load.
	
	\item \textbf{SDMAB} \cite{alizadeh2022reinforcement}. It allows each MUE to maintain an MAB, where arms correspond to individual BSs. The MUE randomly selects the BSs to be associated using the upper confidence bound principle. The central controller collects all the MUEs' association requests and evaluates the system performance under both the new association scheme and the historical optimal association scheme. The controller selects the better one as the actual scheme to be executed and delivers it to MUEs. MUEs update the parameters of their MABs based on the subsequent communication performance.
\end{itemize}

It is worth noting that neither LCUAS nor SDMAB consider uplink and downlink switching and subchannel allocation. Hence, we complement them with a baseline scheme, where the switching point is fixed to the midpoint of the time slot and the first idle subchannel at the BS is allocated to the newly associated MUE, to make them applicable to our problem.

\begin{figure*}[!t]
	\centering
	\subfloat[$\left|\mathcal{U}\right| = 150, \theta_b = 30^\circ$]{
		\includegraphics[width=0.33\linewidth]{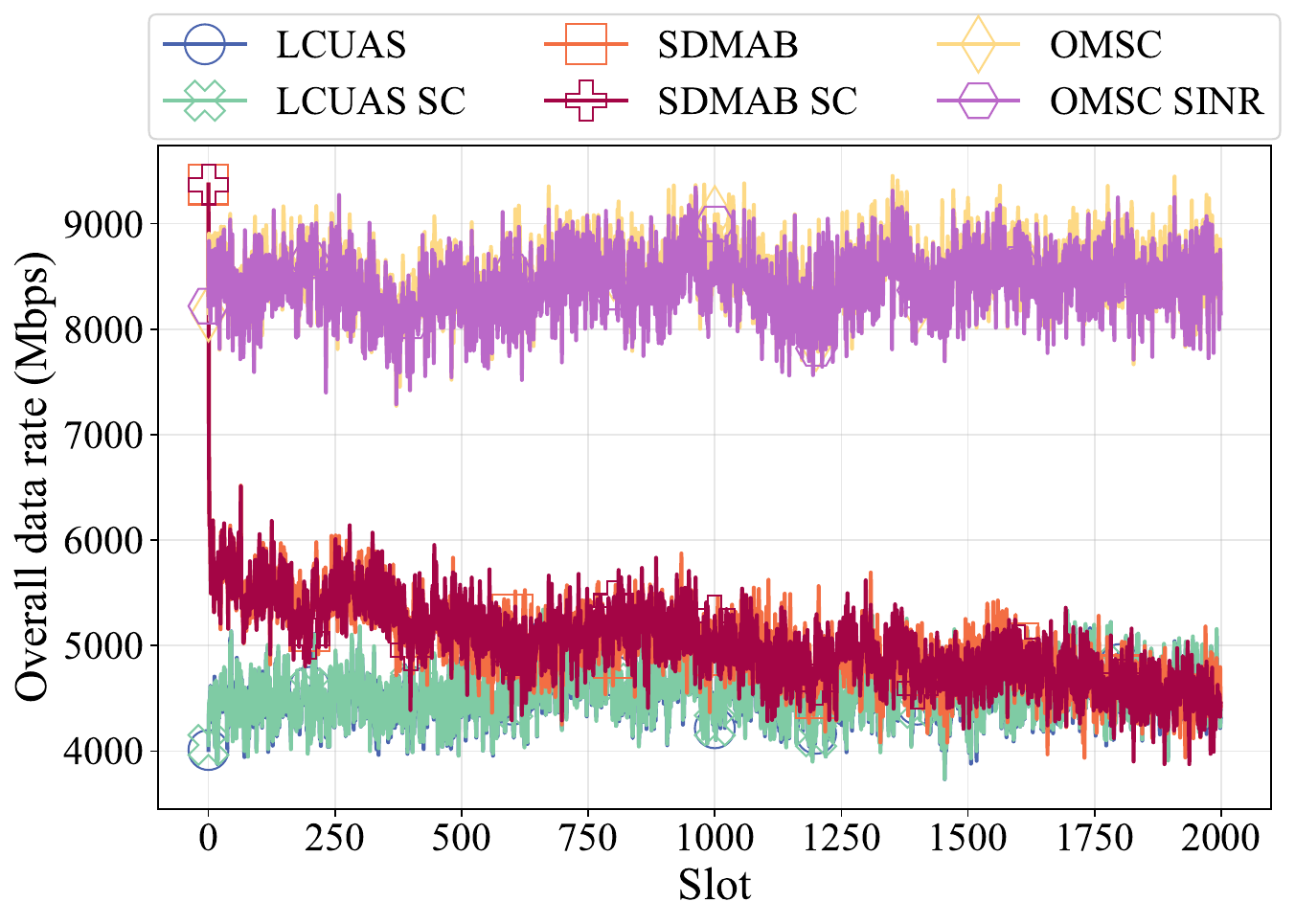}
		\label{fig:overall data rate U 150 theta 30 beta 0.5}
	}
	\subfloat[$\left|\mathcal{U}\right| = 200, \theta_b = 30^\circ$]{
		\includegraphics[width=0.33\linewidth]{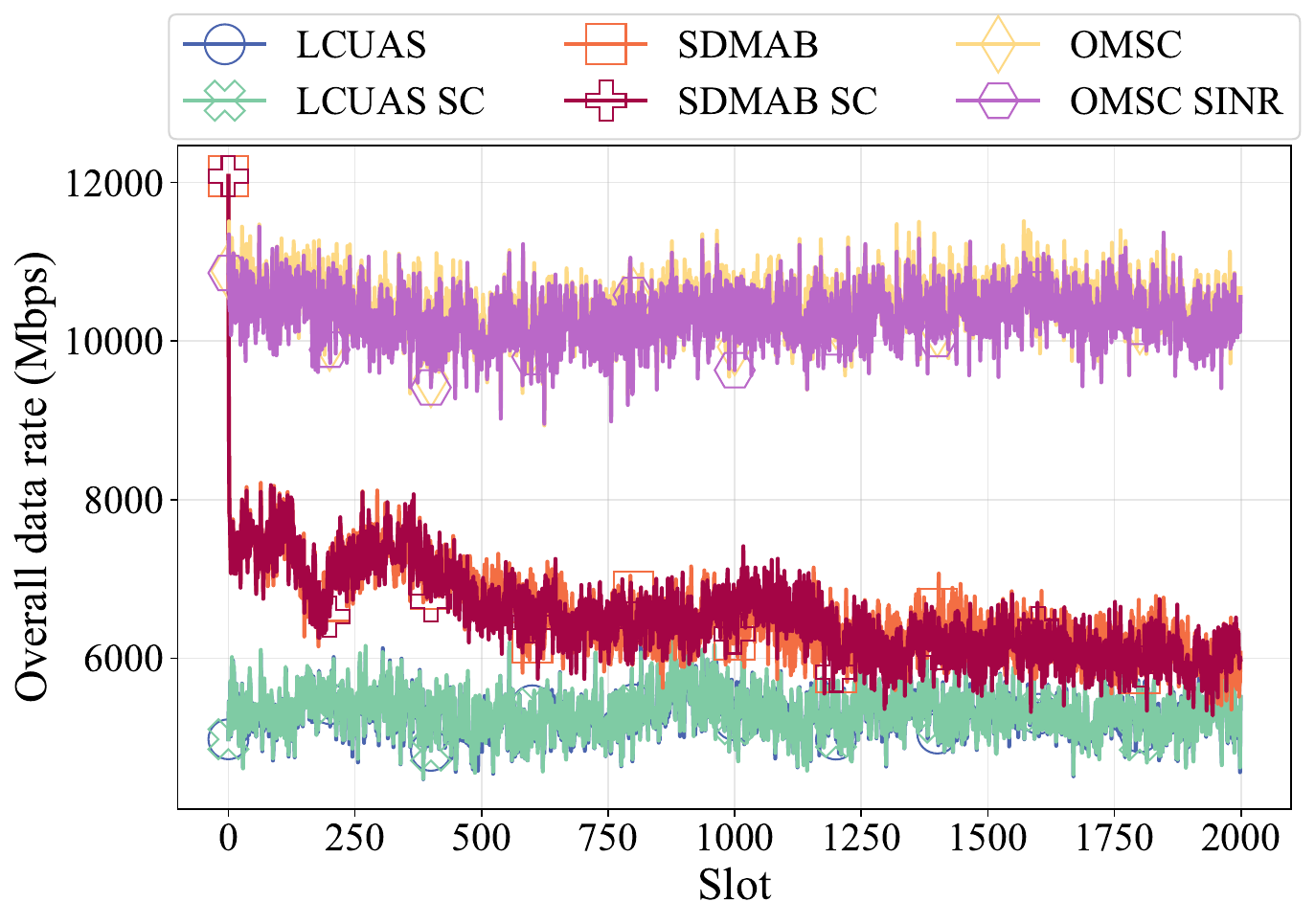}
		\label{fig:overall data rate U 200 theta 30 beta 0.5}
	}
	\subfloat[$\left|\mathcal{U}\right| = 250, \theta_b = 30^\circ$]{
		\includegraphics[width=0.33\linewidth]{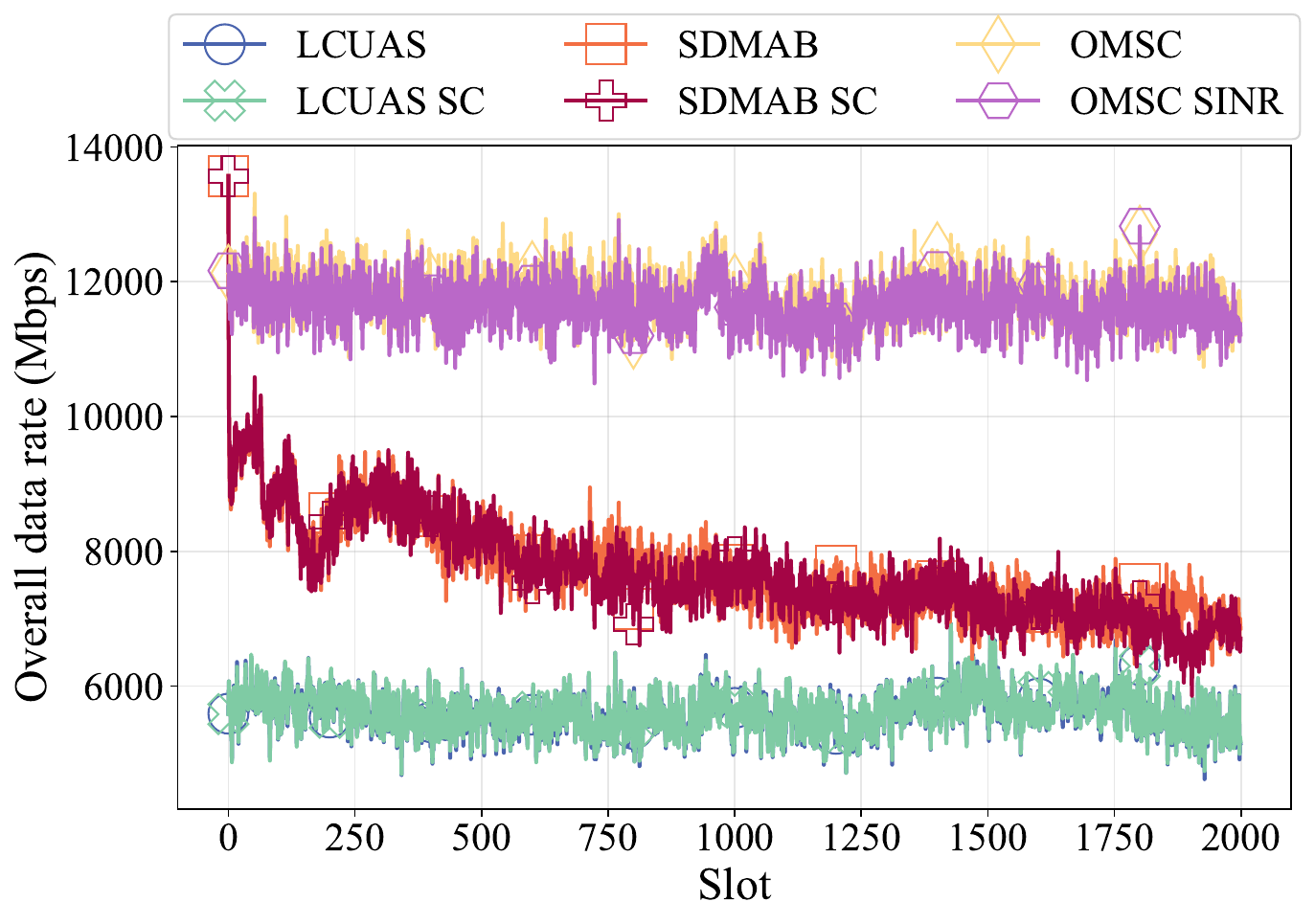}
		\label{fig:overall data rate U 250 theta 30 beta 0.5}
	}
	\\
	\subfloat[$\left|\mathcal{U}\right| = 150, \theta_b = 10^\circ$]{
		\includegraphics[width=0.33\linewidth]{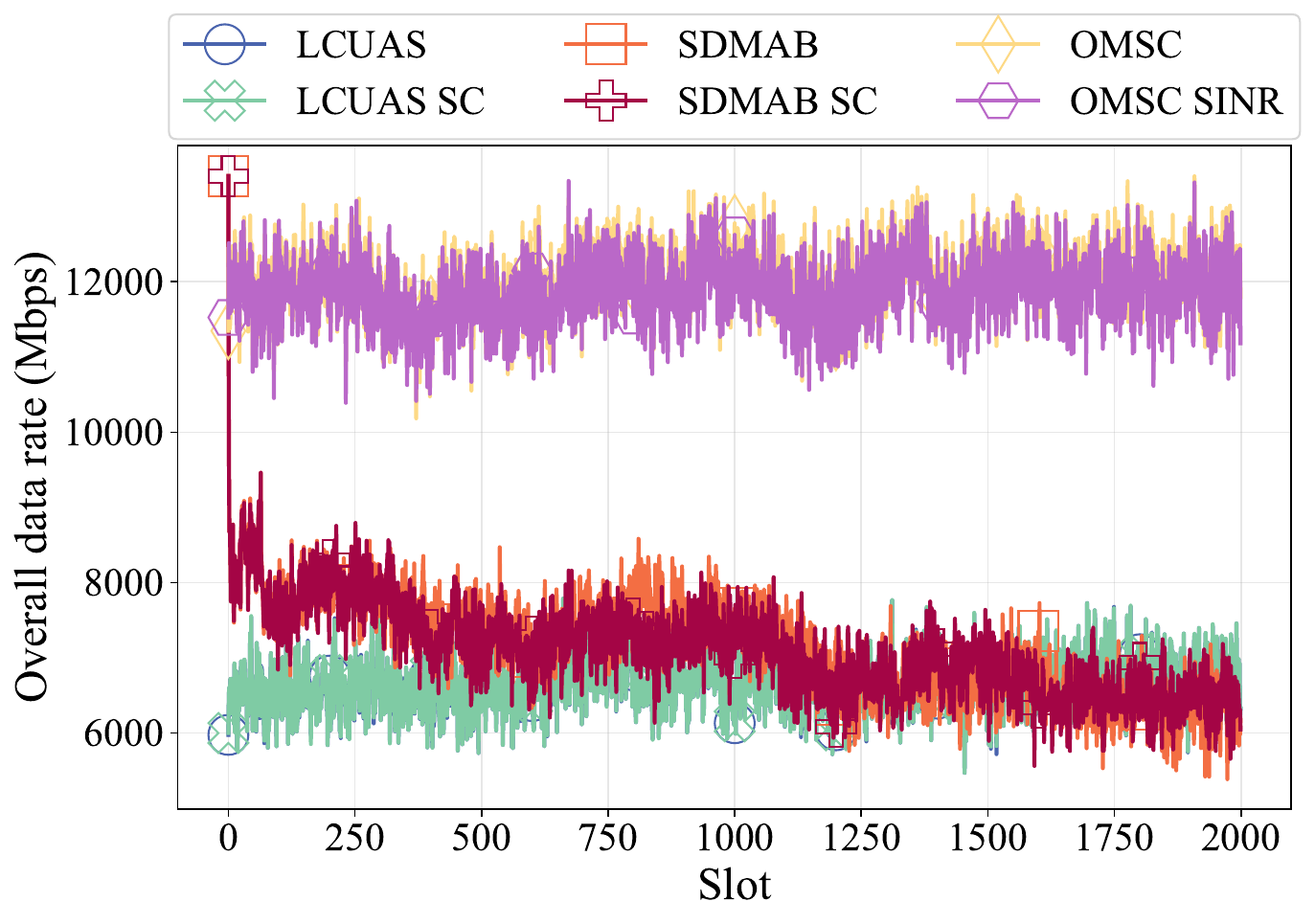}
		\label{fig:overall data rate U 150 theta 10 beta 0.5}
	}
	\subfloat[$\left|\mathcal{U}\right| = 200, \theta_b = 10^\circ$]{
		\includegraphics[width=0.33\linewidth]{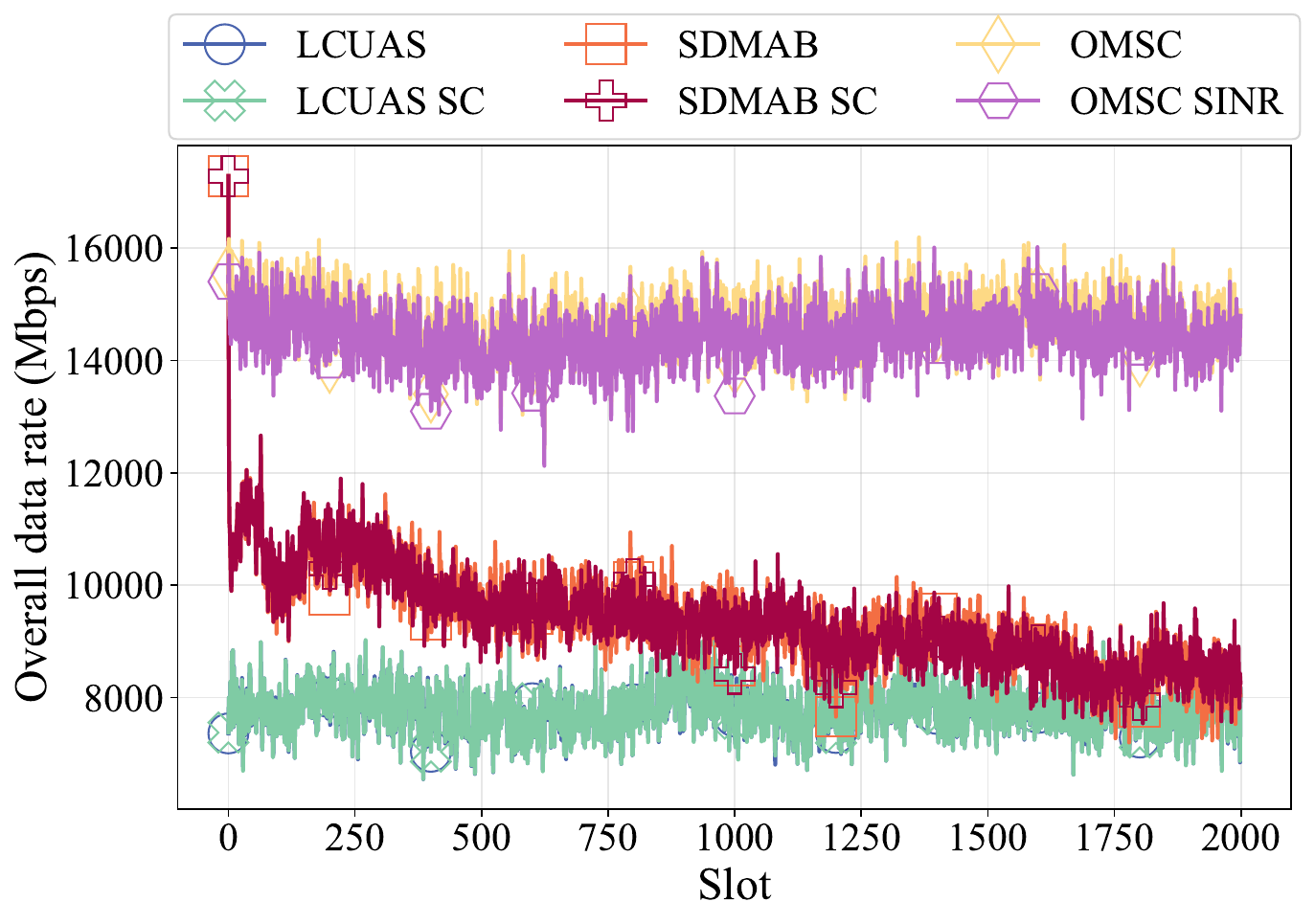}
		\label{fig:overall data rate U 200 theta 10 beta 0.5}
	}
	\subfloat[$\left|\mathcal{U}\right| = 250, \theta_b = 10^\circ$]{
		\includegraphics[width=0.33\linewidth]{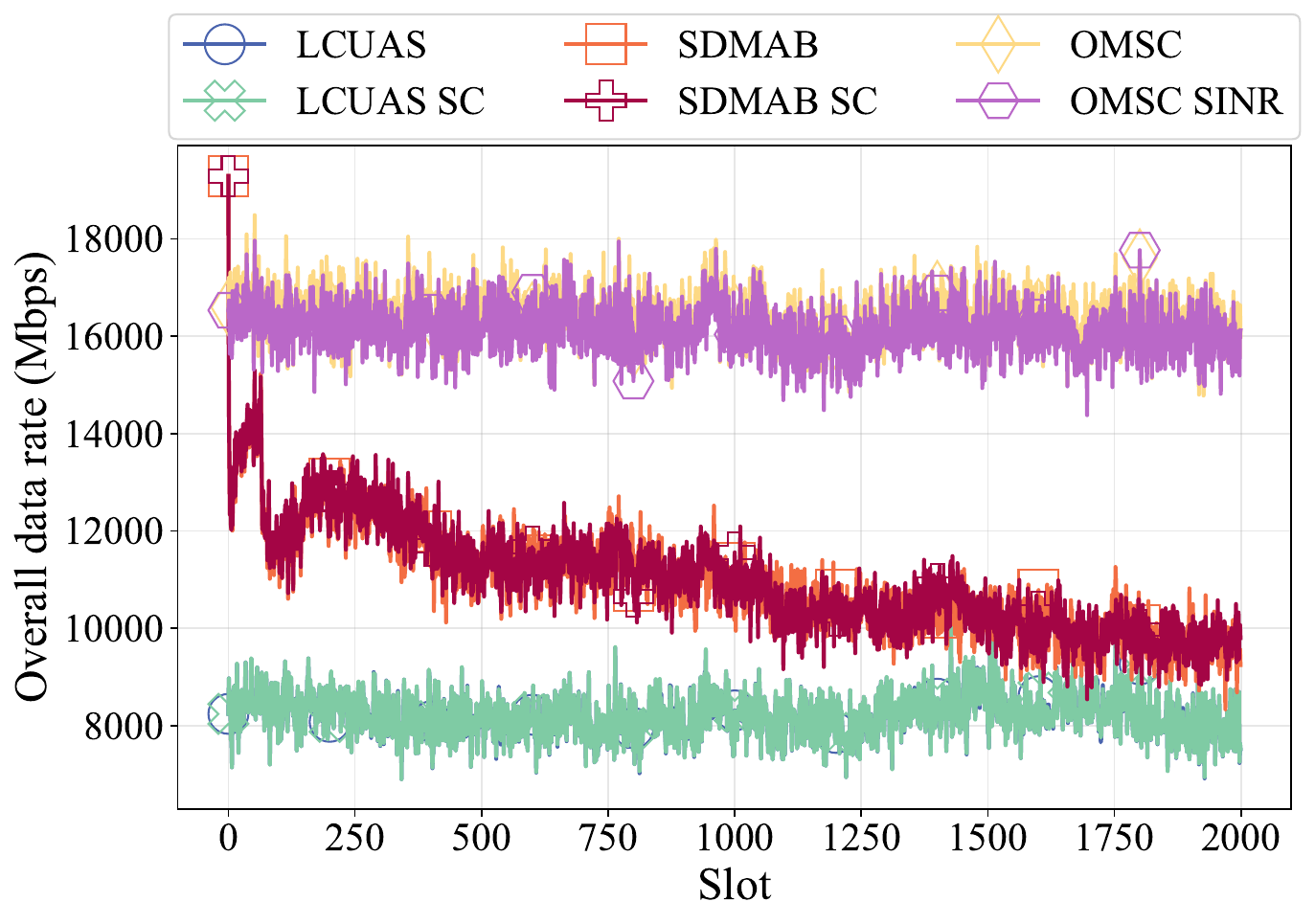}
		\label{fig:overall data rate U 250 theta 10 beta 0.5}
	}
	\caption{Overall data rate when $\beta_\text{f}$ is fixed at 0.5.}
	\label{fig:overall data rate}
\end{figure*}

\subsection{Performance Metrics}

We focus on the following performance indicators:
\begin{enumerate}
	\item \textbf{Overall data rate.} It refers to the sum of the uplink and downlink data rates for all MUEs in the system.
	
	\item \textbf{Effective data rate.} It refers to the sum of the data rates of MUEs whose requirements are met in both directions.
	
	\item \textbf{The number of satisfied MUEs.} It refers to the number of MUEs whose requirements are met in both directions.
	
	\item \textbf{Decision time.} It refers to the average runtime consumed by the algorithm to make a single decision.
\end{enumerate}

\subsection{Simulation Environment and Parameter Settings}

The simulation is deployed on a server equipped with 1051 Intel(R) Xeon(R) Gold 5218R CPU @ 2.10GHz and 252G memory. The programming language is Python 3.10.4.

The simulation area is a rectangle with horizontal axis ranges $[-1000, 1000]$ meters and vertical axis ranges $[-500, 500]$ meters. There are two MBSs located at $(-500, 0)$ and $(500, 0)$, and PBSs scatter uniformly in the area at a density of $30/\text{km}^2$. We summarize the values of other main parameters in \tabref{tab:parameter settings}. Symmetrical mmWave antennas are considered for BSs and MUEs \cite{rappaport2013millimeter}. The ratio of MUEs for each data rate requirement is set to $3:4:3$. For each parameter combination, we performed simulations up to 2000 time slots with 10 repetitions.

\subsection{Performance Evaluation}

In this subsection, we show the performance of each algorithm. Among them, LCUAS SC replaces the baseline subchannel assignment of LCUAS for SCSA, and so does SDMAB SC. 

\subsubsection{Overall data rate}

\begin{figure*}[!t]
	\centering
	\subfloat[$\left|\mathcal{U}\right| = 150, \theta_b = 30^\circ$]{
		\includegraphics[width=0.33\linewidth]{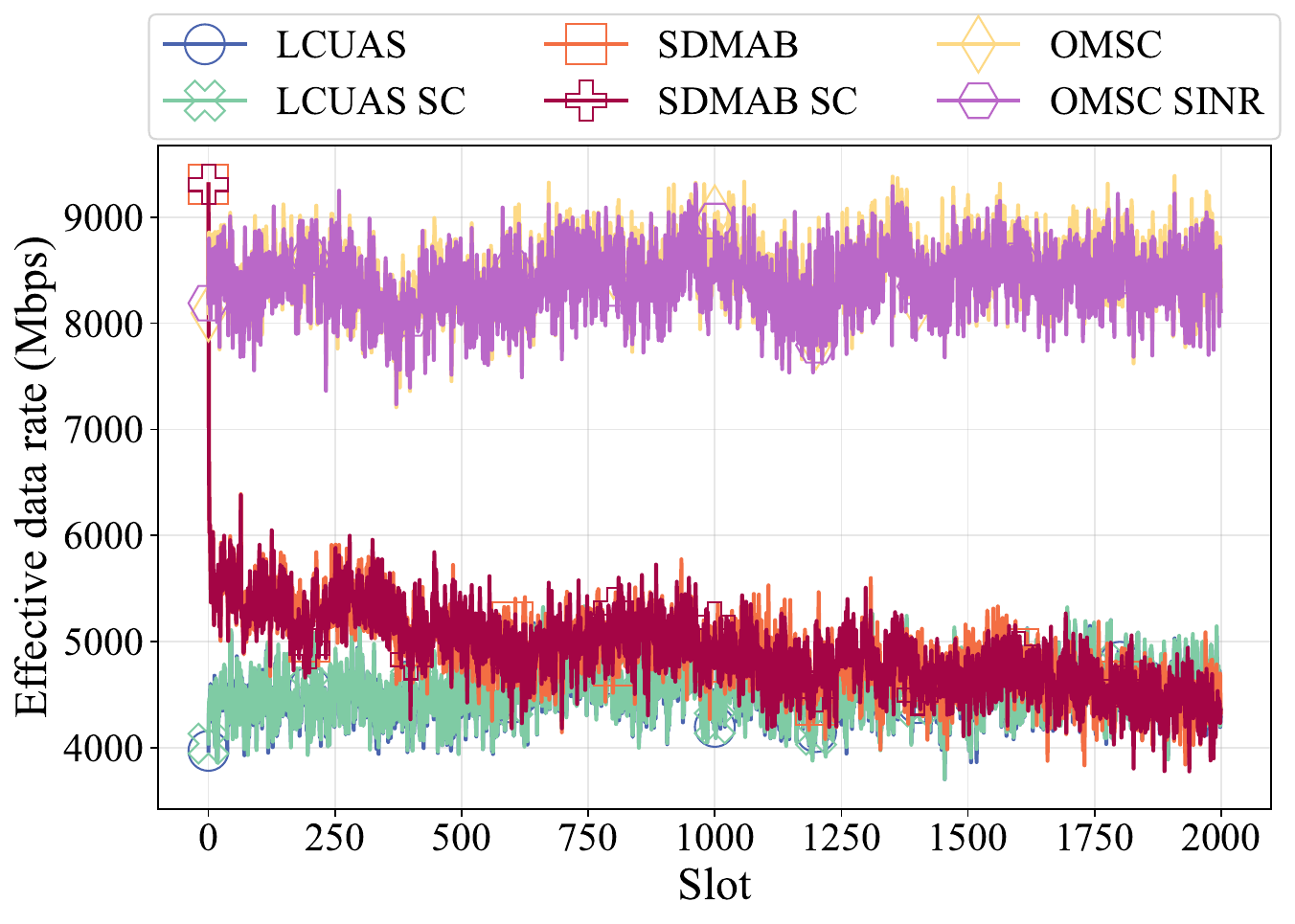}
		\label{fig:effective data rate U 150 theta 30 beta 0.5}
	}
	\subfloat[$\left|\mathcal{U}\right| = 200, \theta_b = 30^\circ$]{
		\includegraphics[width=0.33\linewidth]{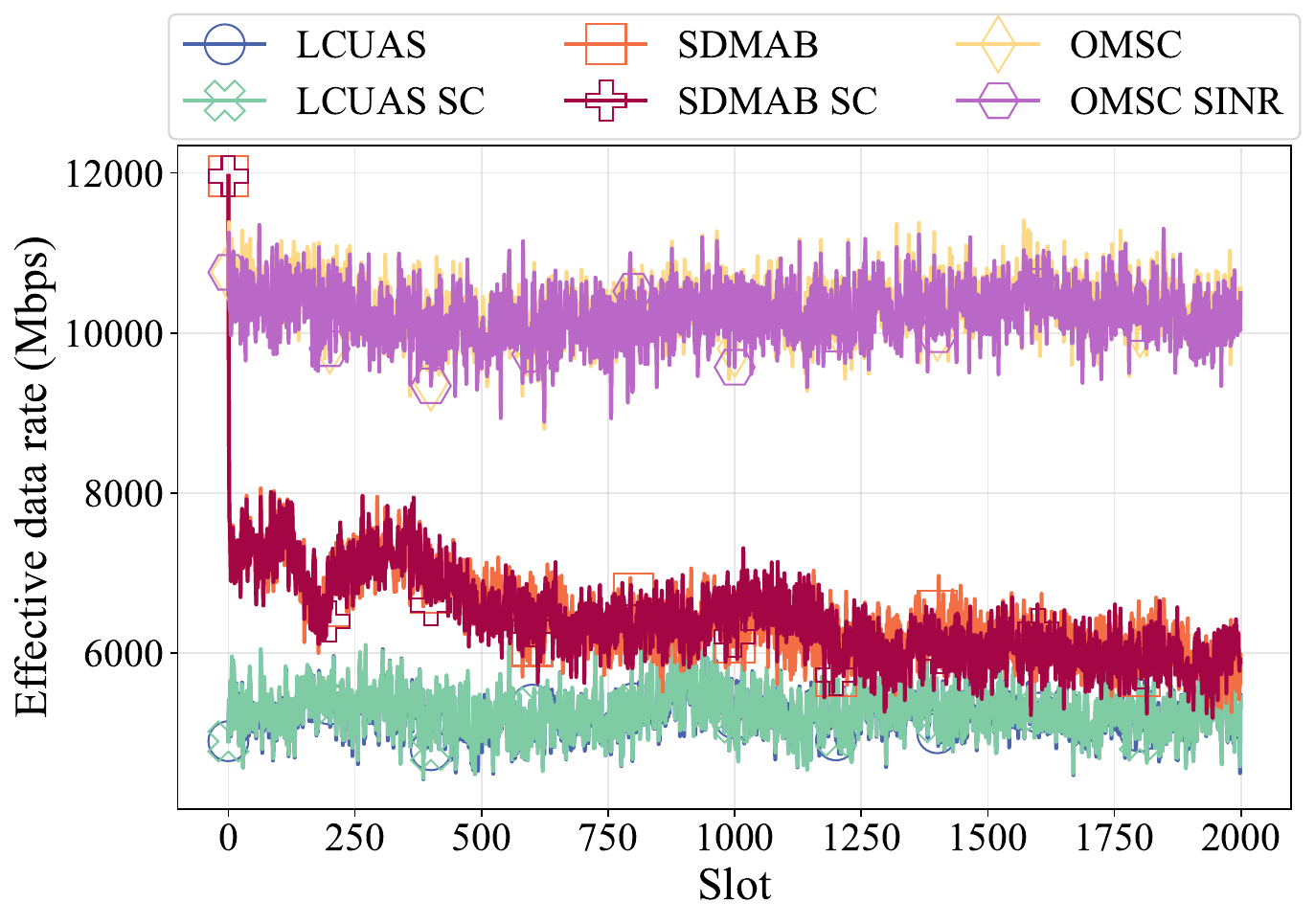}
		\label{fig:effective data rate U 200 theta 30 beta 0.5}
	}
	\subfloat[$\left|\mathcal{U}\right| = 250, \theta_b = 30^\circ$]{
		\includegraphics[width=0.33\linewidth]{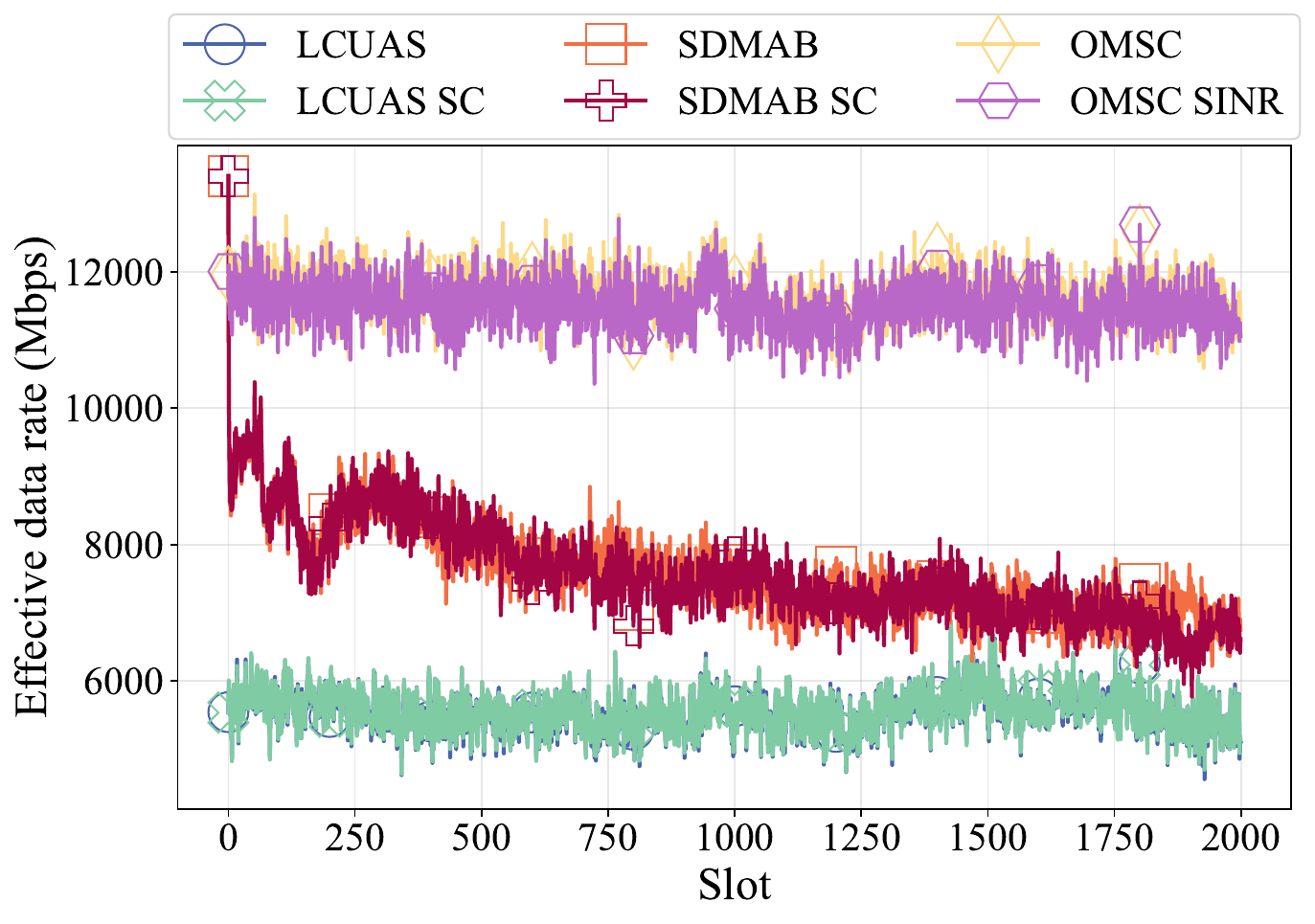}
		\label{fig:effective data rate U 250 theta 30 beta 0.5}
	}
	\\
	\subfloat[$\left|\mathcal{U}\right| = 150, \theta_b = 10^\circ$]{
		\includegraphics[width=0.33\linewidth]{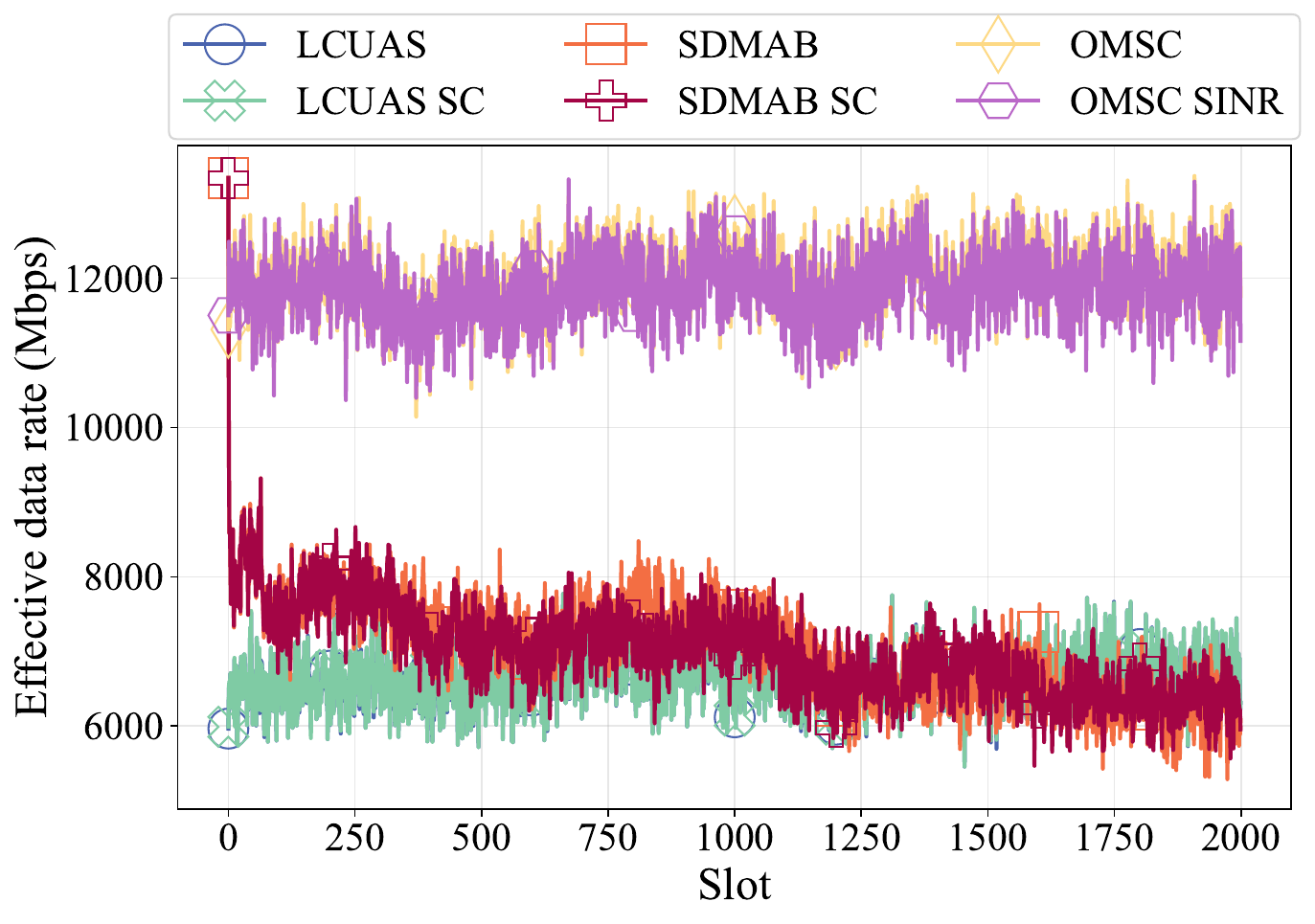}
		\label{fig:effective data rate U 150 theta 10 beta 0.5}
	}
	\subfloat[$\left|\mathcal{U}\right| = 200, \theta_b = 10^\circ$]{
		\includegraphics[width=0.33\linewidth]{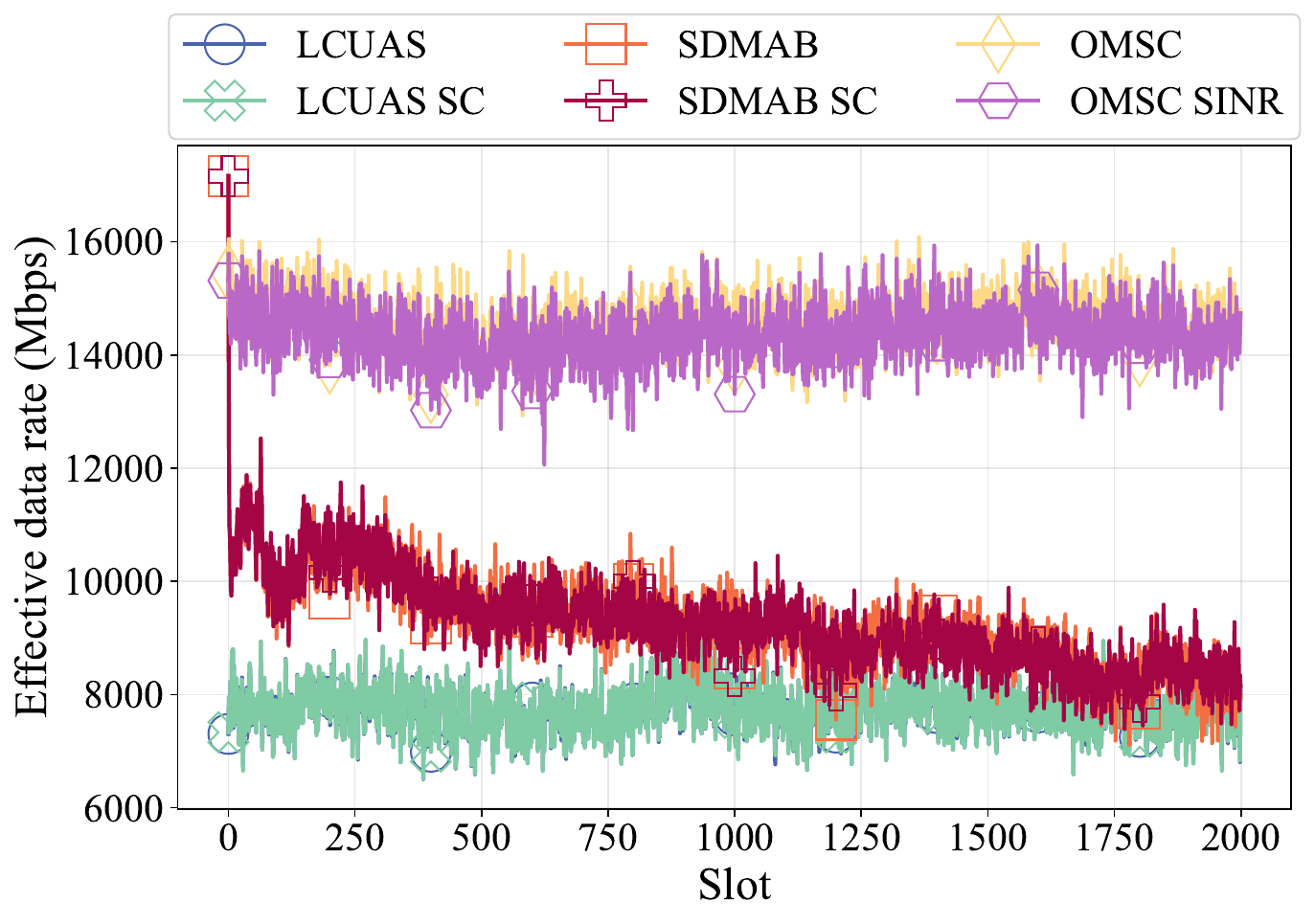}
		\label{fig:effective data rate U 200 theta 10 beta 0.5}
	}
	\subfloat[$\left|\mathcal{U}\right| = 250, \theta_b = 10^\circ$]{
		\includegraphics[width=0.33\linewidth]{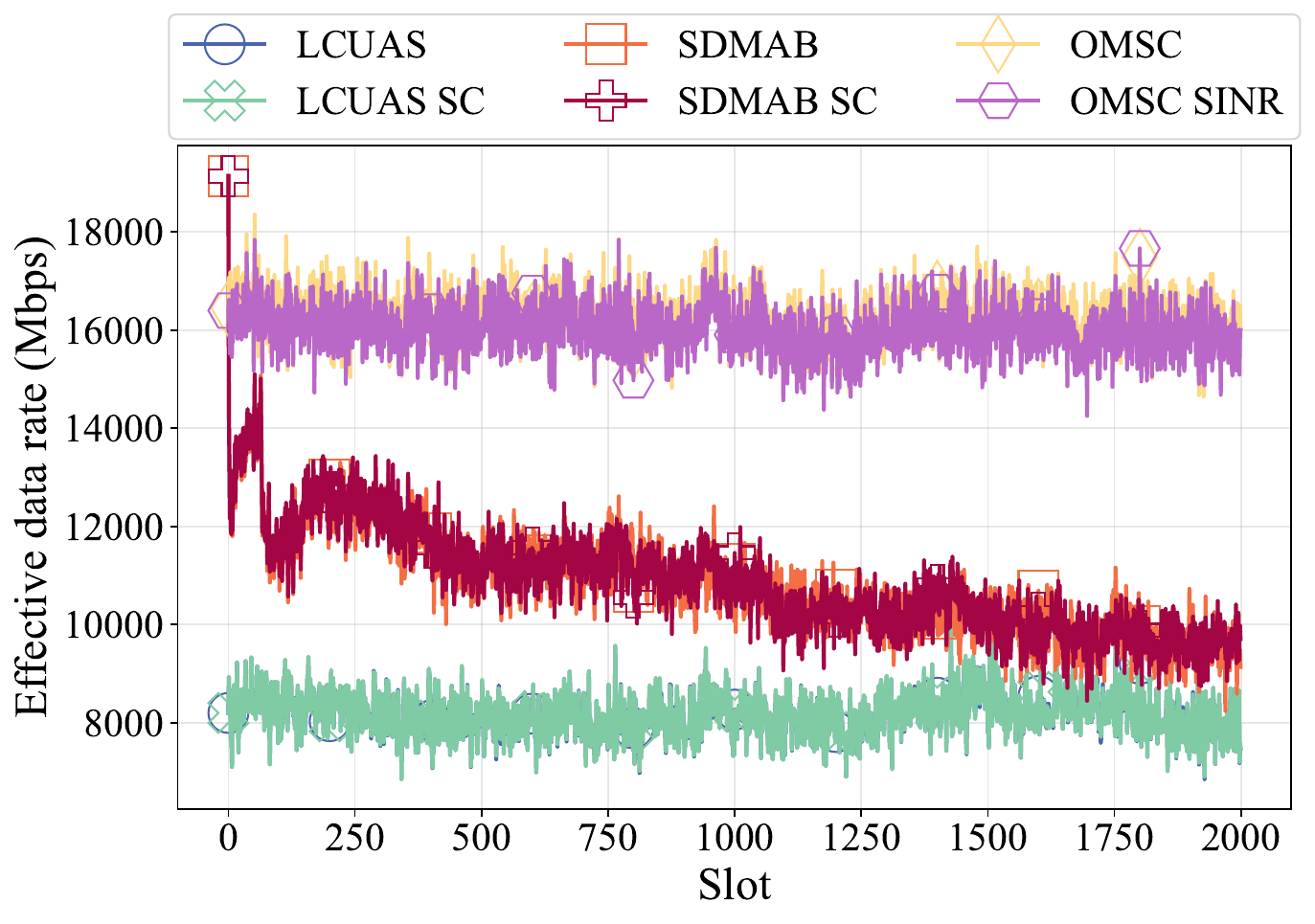}
		\label{fig:effective data rate U 250 theta 10 beta 0.5}
	}
	\caption{Effective data rate when $\beta_\text{f}$ is fixed at 0.5.}
	\label{fig:effective data rate}
\end{figure*}

\figref{fig:overall data rate} presents the average of the overall data rate after repeated experiments. Different numbers of MUEs and antenna parameters are examined, but $\beta_\text{f}$ is fixed at 0.5. It is easy to see that the OMSC-related algorithms consistently outperform the others, and the LCUAS-related algorithms are the worst most of the time. This is mainly because LCUAS employs a coarse-grained UA scheme based on distance and BS load, which has the advantage of obtaining a feasible solution very quickly, but it is also prone to produce low-quality association schemes in multi-band HetNets. In contrast, SDMAB-related algorithms can interact with the environment based on the feedback from the central controller, which helps to improve the performance. However, the MAB requires the environment to be Markov, which is hardly guaranteed in multi-user mobile scenarios because the choices of all users collectively affect the environment, resulting in limited improvement.

\figref{fig:overall data rate} also demonstrates that in mmWave networks, the interference structure does correlate with the UA results. This is mainly reflected in the fact that the curves of LCUAS and LCUAS SC are highly overlapping, but those of SDMAB and SDMAB SC are significantly different. Comparing the curves of SDMAB and SDMAB SC, we can find that the lower envelope of the latter is generally higher than that of the former, and the range of curve fluctuations is more concentrated. This confirms our previous analysis. That is, the transformation and spectral clustering methods for the subchannel allocation problem do sacrifice the optimality of the solution, but SCSA can quickly give a not-bad subchannel assignment scheme and its performance is more stable than the benchmark.

Looking at \figref{fig:overall data rate} vertically, when the main beam of the directional antenna is narrowed and the directivity is increased, the overall data rate of all the algorithms is significantly improved, which confirms that the directional antenna can enhance the data transmission capability of mmWave networks. In addition, when the main beam is narrowed, the overlap of the curves of SDMAB and SDMAB SC becomes higher, which indicates that a directional antenna can indeed reduce the interference of mmWave links. This also explains why the OMSC SINR curve is almost always lower than that of OMSC, i.e., with the assistance of these hardware techniques and scheduling algorithms, the pseudo data rate based on SINR conservatively estimates the data transmission capability of the multi-band HetNet.

\subsubsection{Effective data rate}

In \figref{fig:effective data rate} we use the same parameter settings as in \figref{fig:overall data rate} to compare the effective data rates of the algorithms. The phenomenon shown in \figref{fig:effective data rate} is basically consistent with \figref{fig:overall data rate}, except that the effective data rate is lower than the corresponding overall data rate, which is almost imperceptible. From this phenomenon, it can be inferred that those MUEs who did not receive satisfactory service all experienced poor channel quality, so the contribution of their data rate to the overall data rate was almost negligible. This is very consistent with the communication characteristic of mmWave that poorer channels produce severe attenuation.

As shown in \figref{fig:overall data rate} and \figref{fig:effective data rate}, when the number of MUEs increases, the data rate achieved by each algorithm improves, which means that they are all able to filter the MUEs so that the limited access opportunities are granted to more suitable MUEs. This is exactly what the optimal matching does. Moreover, by feeding the pseudo supply-demand ratio into a submodular function to be the edge weights, OMSC effectively avoids the over-provisioning of user demand because the gains from increasing the supply are diminishing under the effect of the submodular function, allowing the matching results to satisfy more users and thus creating a significant performance advantage.

\subsubsection{Mumber of Satisfied MUEs}

\begin{figure*}[!tb]
	\centering
	\subfloat[$\left|\mathcal{U}\right| = 150, \beta_\text{f} = 0.5$]{
		\includegraphics[width=0.33\linewidth]{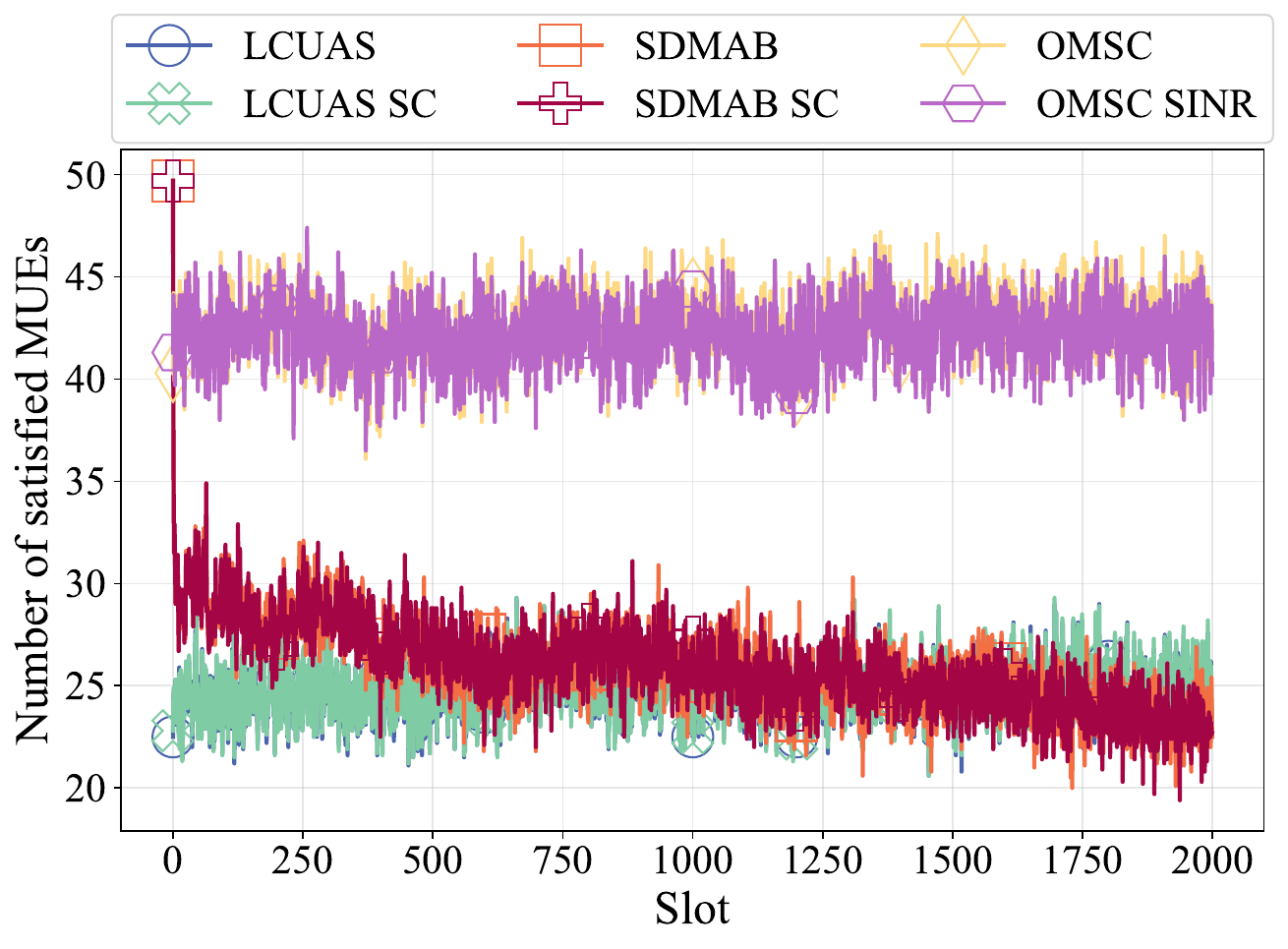}
		\label{fig:satisfied MUE number U 150 theta 30 beta 0.5}
	}
	\subfloat[$\left|\mathcal{U}\right| = 200, \beta_\text{f} = 0.5$]{
		\includegraphics[width=0.33\linewidth]{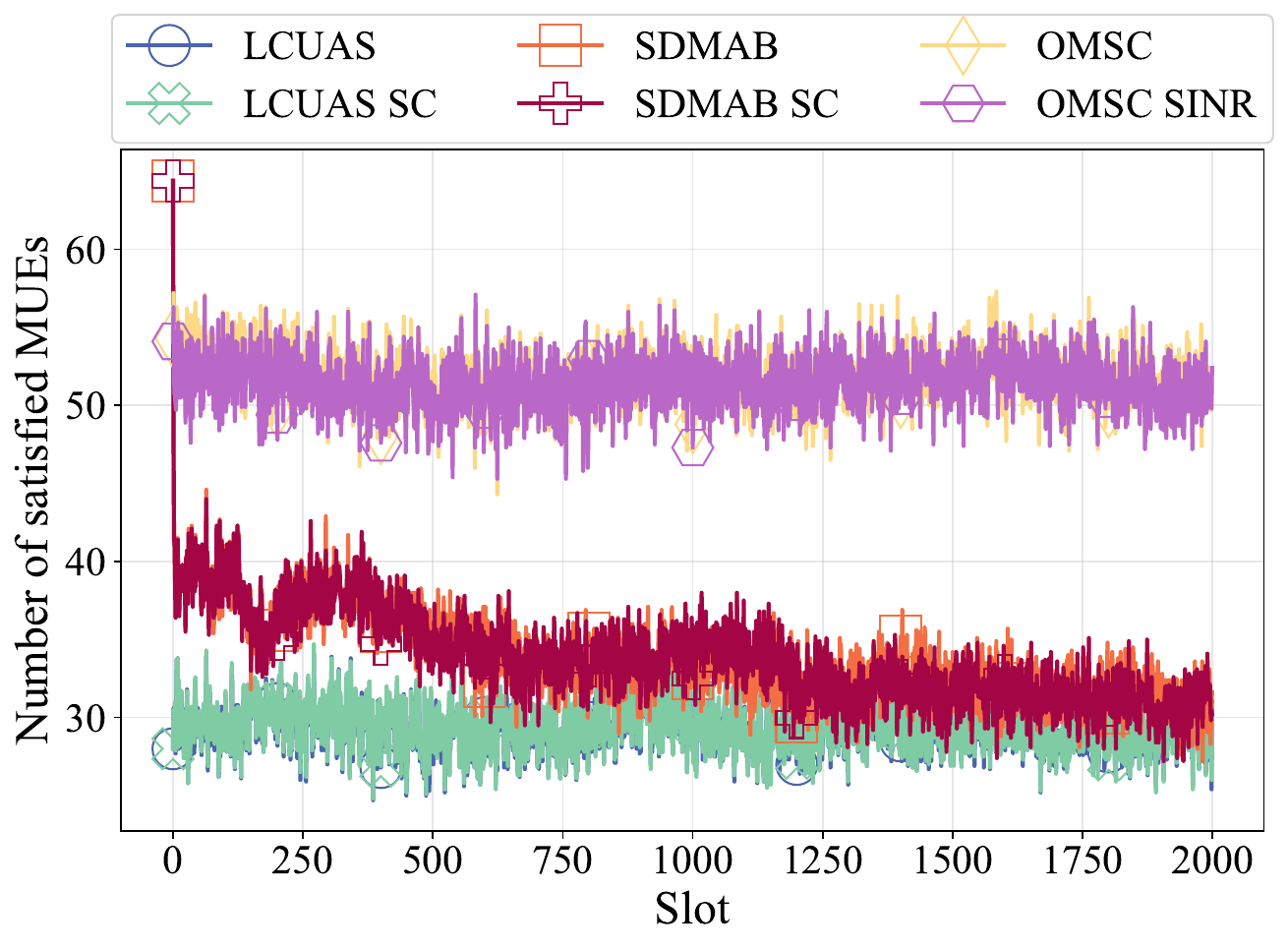}
		\label{fig:satisfied MUE number U 200 theta 30 beta 0.5}
	}
	\subfloat[$\left|\mathcal{U}\right| = 250, \beta_\text{f} = 0.5$]{
		\includegraphics[width=0.33\linewidth]{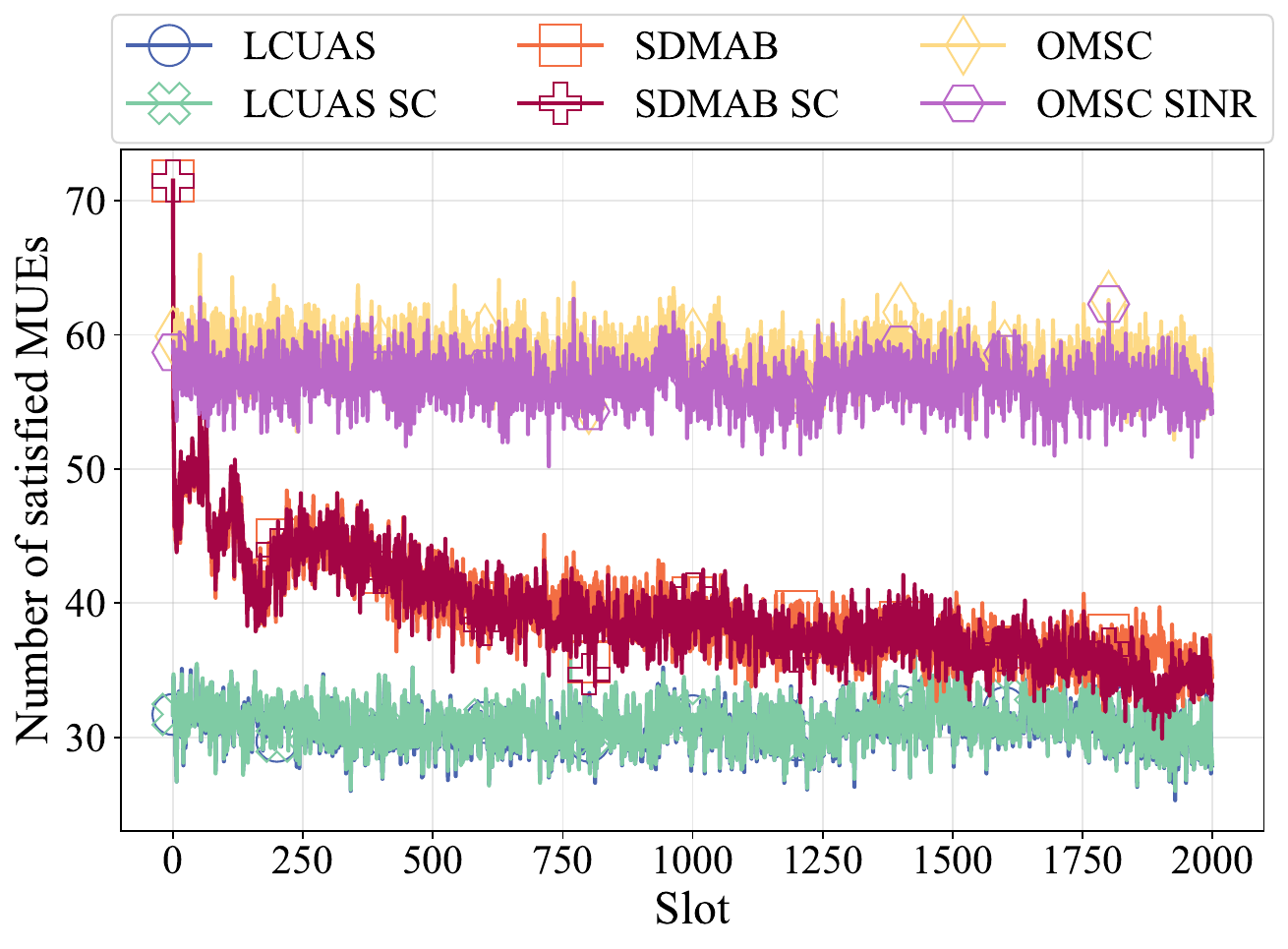}
		\label{fig:satisfied MUE number U 250 theta 30 beta 0.5}
	}
	\\
	\subfloat[$\left|\mathcal{U}\right| = 150, \beta_\text{f} = 1.5$]{
		\includegraphics[width=0.33\linewidth]{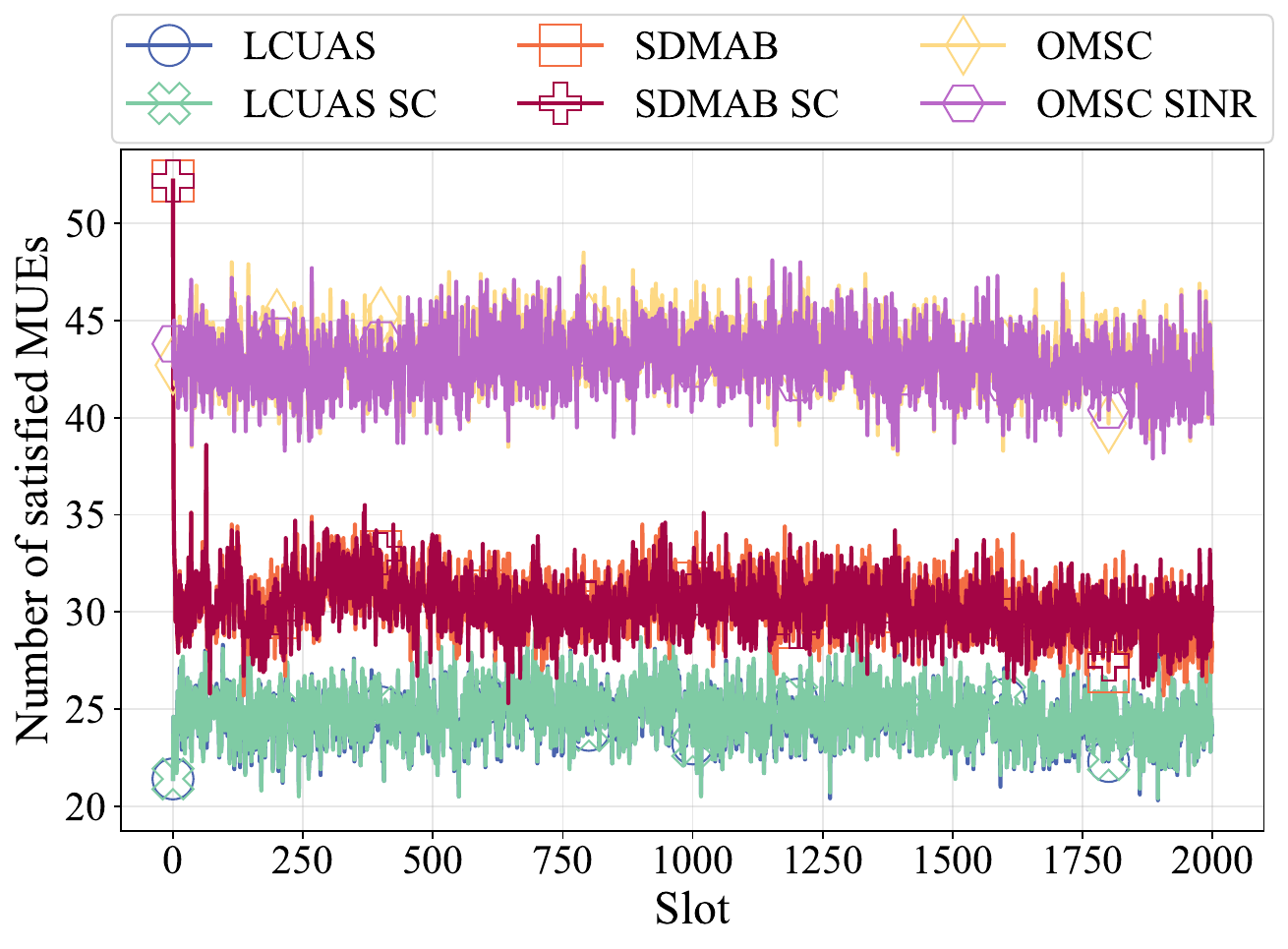}
		\label{fig:satisfied MUE number U 150 theta 30 beta 1.5}
	}
	\subfloat[$\left|\mathcal{U}\right| = 200, \beta_\text{f} = 1.5$]{
		\includegraphics[width=0.33\linewidth]{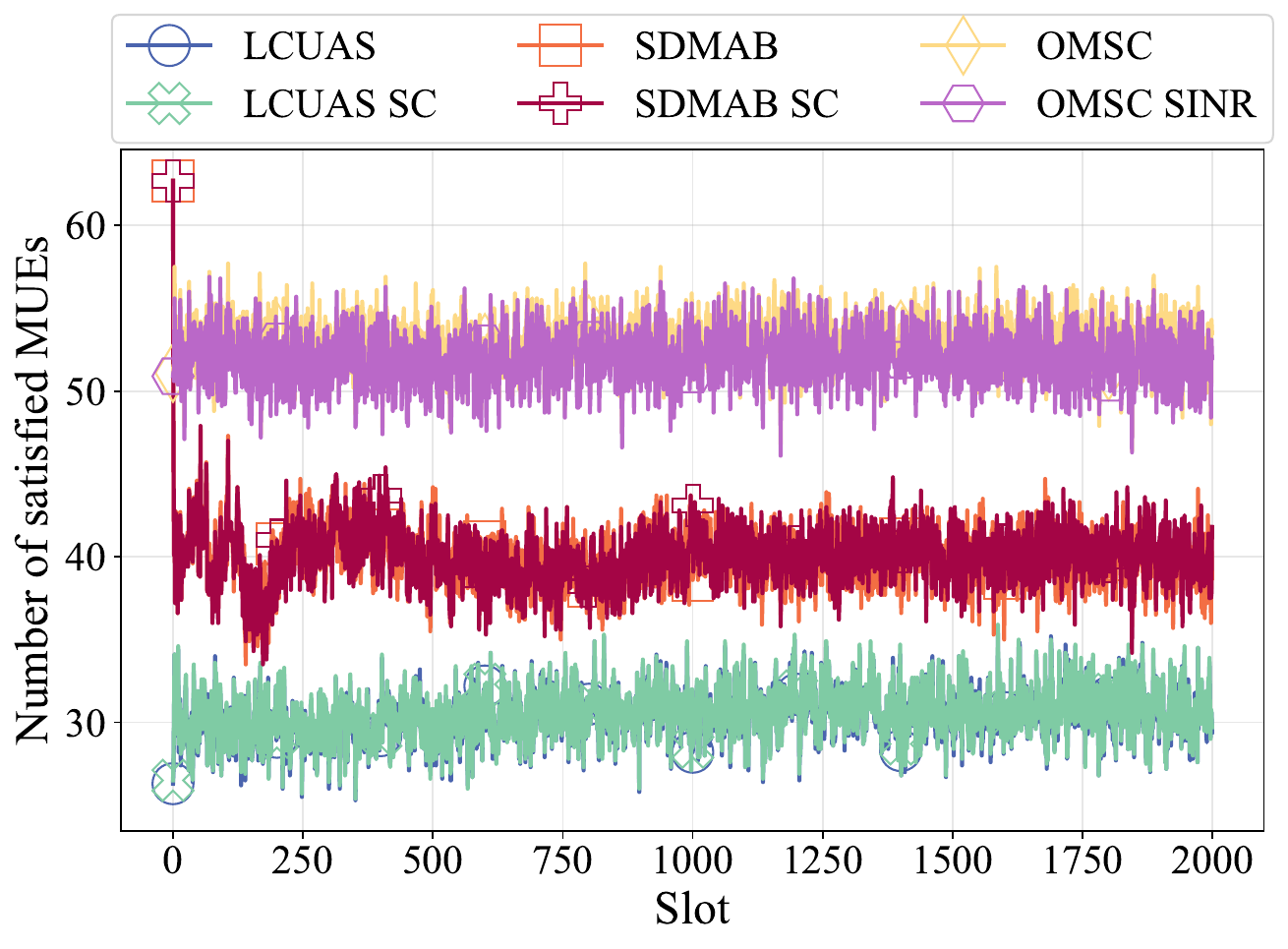}
		\label{fig:satisfied MUE number U 200 theta 30 beta 1.5}
	}
	\subfloat[$\left|\mathcal{U}\right| = 250, \beta_\text{f} = 1.5$]{
		\includegraphics[width=0.33\linewidth]{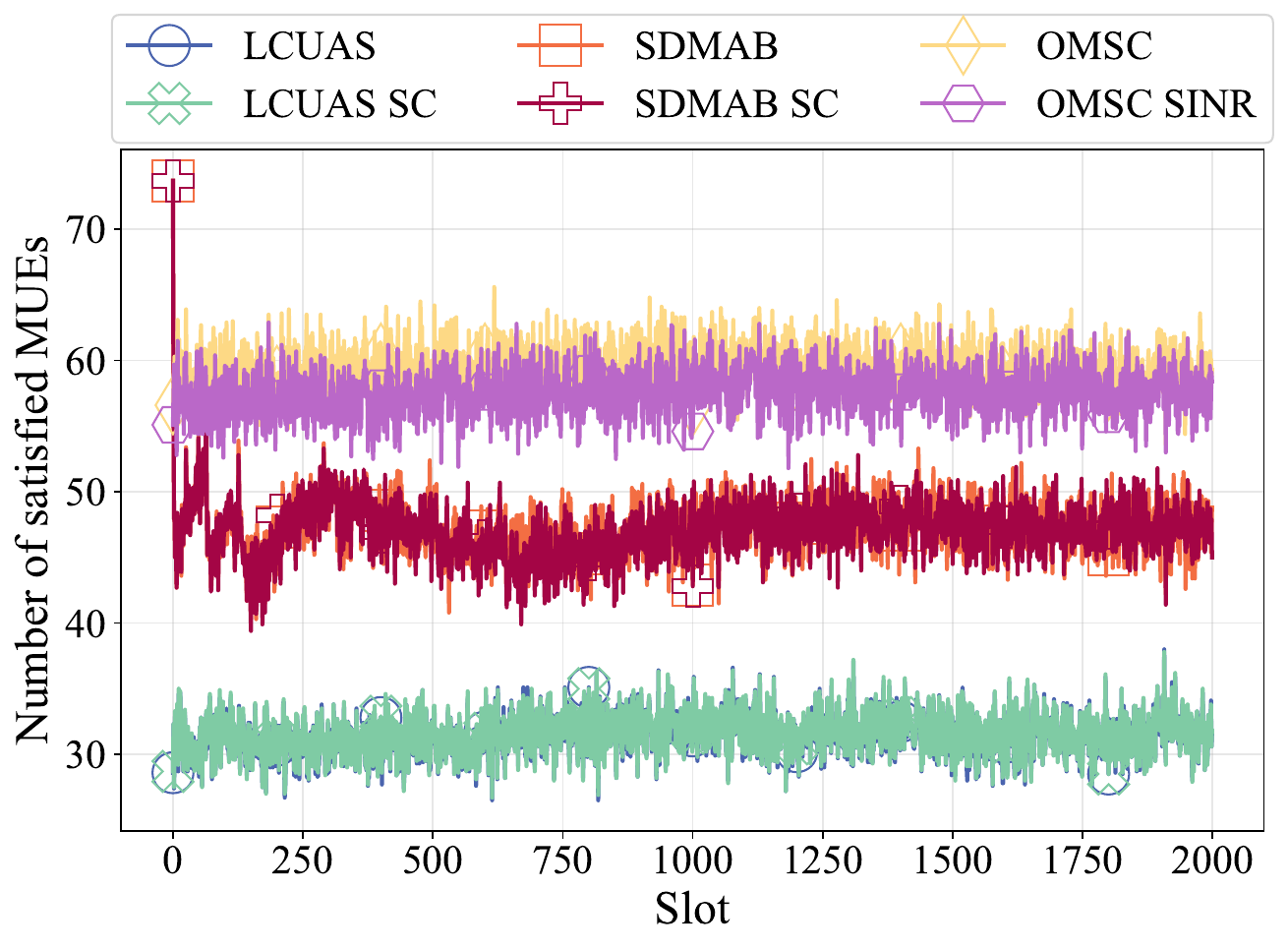}
		\label{fig:satisfied MUE number U 250 theta 30 beta 1.5}
	}
	\caption{Number of satisfied MUEs when $\theta_b$ is fixed at 30$^\circ$.}
	\label{fig:satisfied MUE number}
\end{figure*}

\figref{fig:satisfied MUE number} illustrates the number of satisfied MUEs for the different total numbers of MUEs and Levy flight exponent, the directional antenna parameters are fixed to $(15\text{ dBi}, 30^\circ)$. The relative relationships of the curves in \figref{fig:satisfied MUE number U 150 theta 30 beta 0.5} to \figref{fig:satisfied MUE number U 250 theta 30 beta 0.5} are similar to the first three graphs in \figref{fig:overall data rate} and \figref{fig:effective data rate}, which again supports the above analysis.

Comparing each column in \figref{fig:satisfied MUE number} shows that when user mobility becomes weaker, i.e., $\beta_\text{f}$ increases, the performance of LCUAS-related and OMSC-related algorithms are only slightly promoted, while that of the SDMAB-related algorithms are significantly enhanced. This is mainly because the Markov property of the environment becomes more stringent when user mobility decreases, which obviously has a greater impact on the SDMAB-related algorithms. However, scenarios, where dynamics exist and multiple users make decisions independently, can hardly be strictly Markov, and the inherent shortcomings of SDMAB always exist, so its performance still cannot break through the level of OMSC.

For the other two classes of algorithms, although they do not depend on the Markov property of the environment, the uncertainty of the wireless channel is relatively moderated by the reduced mobility of the users, so their performance is also slightly improved. From another perspective, it implies that the performance of these algorithms is more robust to user mobility.

\subsubsection{Decision time}

The average time taken by each algorithm to make a single decision is given in \figref{fig:average decision time}. Unsurprisingly, the LCUAS algorithm is the fastest to solve, which stems from its concise execution at the expense of a more ordinary performance in other metrics. OMSC is slightly more efficient than SDMAB, mainly because the controller of SDMAB needs to try and determine which of the temporary UA results and the historical optimum yields a higher data rate in the current network state, which is a more time-consuming process. The additional runtime observed in LCUAS SC and SDMAB SC, compared to LCUAS and SDMAB respectively, is attributed to the greater complexity of the SCSA method relative to the baseline subchannel allocation scheme. With the positive effects of SCSA on other metrics, the additional time overhead introduced is acceptable and well-balanced.

\begin{figure}[!h]
	\centering
	\includegraphics[width=0.8\linewidth]{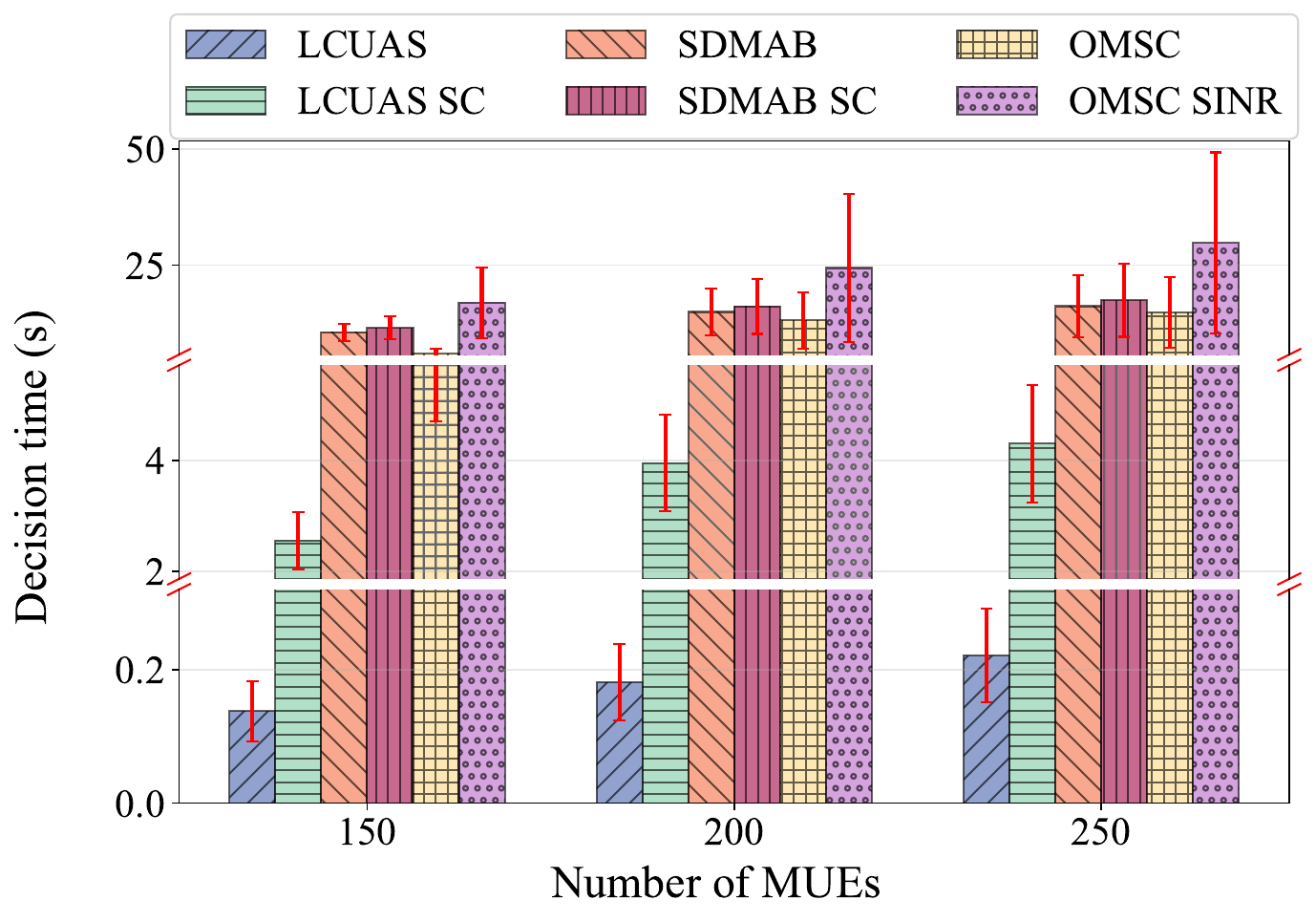}
	\caption{Average decision time.}
	\label{fig:average decision time}
\end{figure}

Referring to the impact of SCSA on the decision time of LCUAS and SDMAB, we can infer that more than half of the runtime of OMSC comes from OMUA, which is consistent with the analysis in \secref{sec:complexity analysis}. Moreover, the runtime of OMSC SINR is even higher than that of OMSC because it needs to additionally traverse all the entities for each connection to estimate the interference strength. However, as shown before, directional antennas and reasonable user association and subchannel scheduling are capable of suppressing the interference well, so pessimistically estimating the interference not only incurs an additional time cost, but also underestimates the transmission capability of the wireless HetNet, which in turn depreciates the algorithm performance. Therefore, we believe that OMSC is better suited than OMSC SINR to solve the problem of this work.

\section{Conclusion and Future Works} \label{sec:conclusion and future works}

In this paper, we investigated the wireless accessing problem in multi-band 5G HetNets, where the directional transmission model of mmWave, asynchronous uplink and downlink switching of BSs, mobility of service subscribers, and asymmetric service demands are considered. We formulated it as INLP and prove the intractability theoretically.

To solve the problem efficiently, we decoupled it into user association, switching point selection, and subchannel allocation problems. We also transformed them into the optimal matching of a bipartite and a generalized Max-cut problem for interference graph. We proposed the OMSC algorithm based on the Kuhn-Munkres algorithm and spectral clustering for the solution. Simulation results show that our algorithm outperforms the comparison methods in terms of overall data rate, effective data rate, and number of satisfied MUEs with acceptable runtime.

In addition, we confirmed that the mmWave interference structure is highly correlated with the UA results, and that directional antennas can effectively reduce mmWave interference. Moreover, in scenarios where both mmWave BSs and MUEs are densely distributed, mmWave interference does exist, but it can be suppressed or even avoided by using directional antennas, as well as improving UA and subchannel allocation.

From the simulation results, we can see that a fraction of the MUEs are not satisfied with the communication quality, which means that these connections are wasted. Since a single connection cannot satisfy the needs of these MUEs, a natural question is what would happen if multiple connections could be established simultaneously for a single MUE. Therefore, in the future, we intend to explore the impact of multi-connection techniques on multi-band 5G HetNets. It is foreseen that the scenario will become more complex but may also be a necessary step for the continuous evolution of the network.

\bibliographystyle{IEEEtran}
\bibliography{ref}

\end{document}